\def\denseformat{
	\setlength{\textheight}{9in}
	\setlength{\textwidth}{6.9in}
	\setlength{\evensidemargin}{-0.2in}
	\setlength{\oddsidemargin}{-0.2in}
	\setlength{\headsep}{10pt}
	\setlength{\topmargin}{-0.3in}
	\setlength{\columnsep}{0.375in}
	\setlength{\itemsep}{0pt}
	\renewcommand{\baselinestretch}{1.1}
}
\newtheorem{thm}{Theorem}[section]
\theoremstyle{definition}
\renewcommand{\baselinestretch}{2}
\theoremstyle{plain}
\newtheorem{lem}[thm]{Lemma}
\newtheorem{col}[thm]{Corollary}
\date{}
\begin{document}

\title{Speedup of Distributed Algorithms for Power Graphs\\ in the CONGEST Model}
\author{Leonid Barenboim\thanks{Open University of Israel. E-mail: {\tt leonidb@openu.ac.il}.\newline
$^{**}$ Ben-Gurion University of the Negev. E-mail: {\tt goldeuri@post.bgu.ac.il}} \and Uri Goldenberg$^{**}$}
\maketitle

\begin{abstract}
We obtain improved distributed algorithms in the CONGEST message-passing setting for problems on power graphs of an input graph $G$. This includes Coloring, Maximal Independent Set, and related problems. For $R = f(\Delta^k,n)$, we develop a general deterministic technique that transforms $R$-round LOCAL model algorithms for $G^k$ with certain properties into $O(R \cdot \Delta^{k/2 - 1})$-round CONGEST algorithms for $G^k$. This improves the previously-known running time for such transformation, which was $O(R \cdot \Delta^{k - 1})$. Consequently, for problems that can be solved by algorithms with the required properties and within polylogarithmic number of rounds, we obtain {\em quadratic} improvement for $G^k$ and {\em exponential} improvement for $G^2$. We also obtain significant improvements for problems with larger number of rounds in $G$.
Notable implications of our technique are the following deterministic distributed algorithms:
\begin{itemize}
  \item We devise a distributed algorithm for $O(\Delta^4)$-coloring of $G^2$ whose number of rounds is $O(\log \Delta + \log^{*}n )$.  
This improves exponentially (in terms of $\Delta$) the best previously-known deterministic result of Halldorsson, Kuhn and Maus.\cite{HKM2020} that required $O(\Delta + \log^{*}n)$ rounds, and the standard simulation of Linial \cite{L92} algorithm in $G^k$ that required $O(\Delta \cdot \log^* n)$ rounds. 
  \item  We devise an algorithm for $O(\Delta^2)$-coloring of $G^2$ with $O(\Delta \cdot \log \Delta + \log^*n)$ rounds, and $(\Delta^2 + 1)$-coloring with $O(\Delta^{1.5} \cdot \log \Delta + \log^*n)$ rounds. This improves quadratically, and by a power of $4/3$, respectively, the best previously-known results of Halldorsson, Khun and Maus. \cite{HKM2020}.
  \item For $k> 2$, our running time for $O(\Delta^{2k})$-coloring of $G^k$ is  $O(k \cdot \Delta^{k/2 - 1} \cdot \log \Delta \cdot \log^* n)$. \\ Our running time for $O(\Delta^k)$-coloring of $G^k$ is  $\tilde{O}(k \cdot \Delta^{k - 1} \cdot \log^* n)$. \\ This improves best previously-known results quadratically, and by a power of $3/2$, respectively.
  \item For constant $k > 2$, our upper bound for $O(\Delta^{2k})$-coloring of $G^k$ nearly matches the lower bound of Fraigniaud, Halldorsson and Nolin. \cite{FHN2020} for {\em checking} the correctness of a coloring in $G^k$.

\end{itemize}
\end{abstract}

\newpage
\section{Introduction}

\subsection{Model and Results}
In the distributed message-passing model a communication network is represented by an unweighted $n$-vertex graph $G = (V,E)$ with maximum degree $\Delta$. Each vertex has a unique ID, represented by $O(\log n)$ bits. Computations proceed in synchronous discrete rounds, each of which consists of message exchange between neighbors, and local computations of vertices. The input for an algorithm is the network graph $G$, where initially each vertex knows only its own ID and the IDs of its neighbors\footnote{A node can learn the ids of its neighbors within a single round in the CONGEST model}, and the values $n$ and $\Delta$. During an execution, within $k > 0$ rounds, vertices may obtain information about other vertices in their $k$-hop-neighborhood. The output of an algorithm consists of the final answers returned by all vertices. The running time of an algorithm is the number of rounds until the algorithm terminates in all vertices. In the current paper we focus on the {\em CONGEST} model. In this model the number of bits that can be sent over each edge in each round is bounded by $O(\log n)$. Consequently, within $k$ rounds each vertex can learn only a small portion of the information that resides in its $k$-hop-neighborhood. Such a neighborhood may contain up to $\Delta^k + 1$ vertices, whose information is much larger than what can be received in $k$ rounds, namely up to $O(k \cdot \Delta \cdot \log n)$ bits. This is in contrast to the distributed {\em LOCAL} model, where message size is unbounded, and each vertex can learn the entire $k$-hop-neighborhood within $k$ rounds.

Among the most studied problems in this setting are coloring and maximal independent set (henceforth, MIS). These problems are very well motivated by real-life network tasks, such as resource allocation, scheduling, channel assignment, etc. Often, in order to perform such a task, a coloring or MIS has to be computed on a {\em power graph} of the network, rather than the original graph that represents the network. For example, in job scheduling, where each vertex can perform one job at a time, and can send one neighbor a job to perform, a coloring is used for job scheduling. In this scheduling all vertices of a certain color are executed at the same time. Then these vertices and neighbours selected by them execute jobs. However, an ordinary proper coloring will not suffice, since a vertex may have several neighbours with the same color, who send it jobs to perform. Since it can handle only one job, the other jobs that are sent simultaneously are lost. To prevent this, $2$-distance coloring is used, where each pair of vertices at distance at most $2$ one from another in $G$ obtain distinct colors. (This is equivalent to an ordinary proper coloring of the power graph $G^2$.) Now each vertex has at most one neighbor with a certain color, and no more than one job arrives to the vertex at a time.

Because of their importance, problems on power graphs for computing coloring, MIS, and related tasks have been very intensively studied in the distributed setting. A plethora of significant results have been obtained in recent years in the CONGEST setting \cite{BCMPP2020,EPSW2014,FHN2020,HKM2020,HKMN2020,MPU2023}. In particular, Bar-Yehuda, Censor-Hillel, Maus, Pai and Pemmaraju \cite{BCMPP2020} devised approximate Minimum Dominating Set and Minimum Vertex Cover (MVC) algorithms for $G^2$.  Halldorsson, Khun, Maus and Nolin \cite{HKMN2020} devised a logarithmic-time randomized algorithm for distance-$2$ coloring using $\Delta^2 + 1$ colors. (For a positive integer $q$, distance-$2$ coloring with $q$ colors is equivalent to ordinary proper coloring of $G^2$ with $q$ colors.)  Halldorsson, Khun and Maus \cite{HKM2020} devised deterministic algorithms for distance-$2$ coloring using $O(\Delta^2)$ colors with time $\mbox{polylog}(n)$, and using $(\Delta^2 + 1)$ colors with time $O(\Delta^2 + \log^* n)$.  Fraigniaud, Halldorsson and Nolin \cite{FHN2020} showed that for $k > 2$, testing whether a given proper coloring is correct requires $\Omega(\Delta^{\lfloor(k - 1)/2\rfloor})$ rounds. Also, a general well-known scheme for simulating LOCAL $R$-round algorithms for $G^k$ in the CONGEST setting provides algorithms with time $O(R \cdot \Delta^{k-1})$ in the CONGEST setting. 

In the last decades, much attention of researchers was devoted to understanding the complexity of these problems on $G$ and $G^k$ as a function of $\Delta$, modulo the unavoidable factor of $\log^* n$. A major question that remained open for many years is whether $o(\Delta) + \log^* n$ solutions are possible for $O(\Delta)$-coloring, MIS and related problems on $G$. In recent breakthroughs it was shown that while MIS and Maximal Matching require $\Omega(\Delta)$ time \cite{BBHORS19}, the problem of $O(\Delta)$-coloring can be solved in $\tilde{O}(\sqrt{\Delta} + \log^* n)$ time \cite{B15}. However, this question for coloring power graphs in sublinear-in-$\Delta^k$ time still remained open, since the best results for $O(\Delta^k)$-coloring of $G^k$ were at least $O(\Delta^k)$. Specifically, for $k  = 2$ it is $O(\Delta^2 + \log^*n)$ \cite{HKM2020}, and for $k > 2$ it is $\tilde{O}(\Delta^{k + k/2 - 1} + \Delta^{k-1} \log^* n)$ (by applying a standard simulation for $G^k$ to \cite{BEG18}). 

In this paper we answer this question in the affirmative, by providing deterministic CONGEST algorithms for $O(\Delta^2)$-coloring of $G^2$ with $\tilde{O}(\Delta + \log^* n)$ rounds, and $O(\Delta^k)$-coloring of $G^k$ with   $\tilde{O}(\Delta^{k-1} + \log^*n)$ rounds\footnote{For simplicity of presentation, we assume that $k$ is even. Our results extend directly to any positive integer $k \geq 2$, by replacing $k/2$ with $\lceil k/2  \rceil$.}. More generally, we provide a speedup technique for various problems, including coloring and MIS, that improves quadratically each phase of an algorithm for $G^k$ that adheres to certain requirements. In particular, such algorithms that perform the standard simulation are improved from running time  $O(R \cdot \Delta^{k-1})$ to  $O(R \cdot \Delta^{k/2 - 1})$. In the case of $G^2$ our speedup of a phase is {\em exponential} (in terms of $\Delta$), resulting in running time $O(R \cdot \mbox{polylog}(\Delta))$. For example, we compute $O(\Delta^4)$-coloring of $G^2$ within $O(\log \Delta + \log^* n)$ time, improving the best previously-known result of $O(\Delta + \log^* n)$.

Our results also give rise to a quadratic improvement in the {\em memory complexity} per vertex.
Specifically, when using {\em aggregation functions}, the size of the result computed by  $w$ for each vertex in its $k/2$-hop-neighborhood is reduced from $\Delta^{k/2}$ to a much smaller value, ideally, $O(1)$. Using such technique the required memory per vertex is also reduced from $\tilde{O}(\Delta^k)$ to $\tilde{O}(\Delta^{k/2})$.

An interesting implication of our results is that $O(\Delta^{2k})$-coloring of $G^k$ can be computed in the CONGEST model in $O(\Delta^{k/2 - 1} \log \Delta \log^* n)$ rounds. This nearly matches the lower bound of  Fraigniaud, Halldorsson and Nolin \cite{FHN2020} for testing a proper coloring that requires $\Omega(\Delta^{\lfloor(k - 1)/2\rfloor})$ rounds.


\subsection{Our Techniques}
The previously-known technique for simulating CONGEST algorithms from $G$ on $G^k$ proceeds as follows. For each round of the simulated algorithm, each vertex $v \in V$ has to obtain its $k$-hop-neighborhood information. Since the number of vertices in the $k$-hop-neighborhood is bounded by $O(\Delta^k)$ and the number of edges is bounded by $O(\Delta^k \cdot \Delta) = O(\Delta^{k+1}$), this neighborhood structure (consisting of vertices and edges between them) can be delivered to $v$ within $O(\Delta^{k})$ rounds. This information is sufficient for $v$ to simulate its local computation in that round for the algorithm in $G^k$. Then, $v$ broadcasts a message to all vertices in its $k$-hop-neighborhood. Such a broadcast is performed by all vertices of $G$ in parallel, and requires $O(\Delta^{k - 1})$ rounds. We note that often it is sufficient to employ $O(\Delta^{k-1})$ rounds also in the stage of obtaining the $k$-hop-neighborhood information. This is the case when only information about vertices is needed, rather than how they are connected. Consequently, various CONGEST algorithms that require $f(\Delta,n)$ rounds in $G$ can be transformed into $O(f(\Delta^k,n) \cdot \Delta^{k-1})$-round algorithms for $G^k$.

Our new method improves this idea, by performing stages of information collection and information broadcast only half the way, to distance $k/2$ rather than $k$. Indeed, for each pair of vertices $u,v \in V$ at distance $k$ one from another, there is a vertex $w$ in the middle of a path between them that can obtain their information by collecting its $k/2$-hop-neighborhood. Then $w$ can perform a computation on $v$ and $w$ (as well as all other vertices in its $k/2$-hop-neighborhood) and return the results to them. However, if this is done in a trivial way, then $w$ should return information of size $\Delta^{k/2}$ to each vertex in its $k/2$-hop-neighborhood. In general, this is done not only by $w$, but by all vertices in the graph in parallel. Consequently, in order to collect $\Delta^{k/2}$ pieces of information from each of $\Delta^{k/2}$ vertices in a $k/2$-hop-neighborhood, again $O(\Delta^k)$ rounds are needed. Thus, a straightforward approach for going half the way does not provide an improvement. But we obtain an improvement using the following more sophisticated method.

Various algorithms can be decomposed into basic steps that perform such operations as checking whether a variable of a vertex appears in its $k$-hop-neighborhood, summing the number of certain value, computing maximum or minimum, etc. In such cases, rather than collecting the entire $k$-hop-neighborhood information and then computing a function locally, aggregation functions can be applied iteratively and distributively. For example, consider a function for computing the maximum of variables in the $k$-hop-neighborhood. In this case, $w$ who is in the middle of a $k$-length path between $u$ and $v$ applies the function iteratively and locally on all vertices in the $k/2$-hop-neighborhood of $w$. The vertex $w$ sends the result to all its immediate neighbors, including the one that is at distance $k/2-1$ from $v$. Then this vertex computes the maximum of the maxima it received from its neighbors, and sends it to its own neighbors, one of which is at distance $k/2-2$ from $v$. After $k/2$ such rounds, $v$ receive the maximum value in its $k$-hop-neighborhood.

The example of the maximum function is a simple case, when the maximum in the $k/2$-hop-neighborhood of $w$ is the same for all its vertices. However, in general, different answers may be needed for different vertices. Another example is a function that checks whether variables of vertices are equal. It may be the case that $u$ has a vertex in the $k/2$-hop-neighborhood of $w$ with a variable that equals to that of $u$, but $v$ does not have such a variable. Then, $w$ must store for each vertex in its $k/2$-hop-neighborhood its own answer. This is done by $w$, by applying aggregation functions locally and iteratively, for each such vertex.
The result of size $\Delta^{k/2}$ is sent to all neighbours of $w$ in parallel. Then, each neighbor applies the aggregation functions for all vertices in its $(k/2-1)$-hop-neighborhood. It does so starting from the results it received from $w$. This way, the outcome now is regarding distance $k/2 + 1$. This continues for $k/2$ phases, where in each phase less data has to be sent (it is reduced by a factor of $\Delta$ in each phase), but the radius  of the computation grows by 1. After $k/2$ such phases, each vertex holds the result of function applications in its $k$-hop-neighborhood.

In order for this technique to work, it employs functions that are (1) {\em commutative} and (2) {\em idempotent}. That is, (1) the order of function application must not affect the result, and (2) applying the same function several times must not affect the result, no matter how many times it is applied. For example, this is the case in the function $max_t(x) = max(x,t)$. A series of applications $max_{t_1}(max_{t_2}(...(max_{t_q}(x)...))$ can be applied in any order, and each $max_{t_i}$ can appear any positive number of times, without changing the outcome. This is important to the success of our method, since pairs of vertices $u,v$ may belong to many $k/2$-hop-neighborhoods that perform computations for them, and the order of computations is not predefined. As mentioned above, we show  that various algorithms for complicated tasks, including coloring and MIS, can be constructed from steps consisting of such operations.

We extend our technique also to functions that are not idempotent, in order to generalize it further. In particular, the counting operation is a very useful building block in various algorithms. For example, it makes it possible to compute how many times a certain value appears in the variables of a $k$-hop-neighborhood of a vertex. (The function increases the result by one each time it is applied on a variable with that value. But when the function is applied on a variable with a different value, the result does not change.) However, this operation is not idempotent, since the outcome depends on the number of times the function is applied. Consequently, if we apply our technique as described above, the result may be larger than the actual number of variables with that value. This is because a certain function invocation may be repeated several times by different vertices in the $k$-hop-neighborhood. 

We propose two solutions for this challenge. In the first solution we analyze how much the result over-counts the correct answer. In certain cases this can be bounded, so an algorithm still works, even with over-counting. In more complicated cases, when precise computation is required, we use BFS trees of radius $k/2$ that are constructed from all vertices in parallel. These trees are broadcasted to distance $k/2$. Then unique paths can be produced between pairs of vertices at distance $k$ one from another. These paths are used to make sure that each function application is executed exactly once, for an (ordered) pair of vertices in a $k$-hop-neighborhood. This incurs an additional running time of $O(\Delta^{k-1})$, but only once during execution. Consequently, the running time of a transformation of an $f(\Delta,n)$-round algorithm for $G$ becomes $O(\Delta^{k-1} + f(\Delta^k,n) \cdot \Delta^{k/2 - 1})$ in $G^k$. This is again a significant improvement over the previously known time of $O(f(\Delta^k, n) \cdot \Delta^{k-1})$.

Another tool we introduce for shrinking message size, which may be of independent interest, is {\em binary search in neighborhoods}. A common building block of distributed algorithms is performing computations on lists that vertices hold, as follows. A vertex has to compute a certain function on its own list and the lists of its neighbors. In some functions the outcome can be determined by a single element of a list. (For example, finding an element that does not appear in the neighbors lists.) 
A naive computation for lists of size $t$ requires $t$ rounds, since the lists have to be delivered to neighbors that apply the function on them. To speedup this process we perform a binary search with assistance of neighbors, so that lists shrink by a certain factor in each round, but each of them still contains an element from which the function outcome can be deduced. Finally, each list contains just a single element, who provides the desired result. This tool is a main ingredient in our algorithm for $O(\Delta^4)$-coloring of $G^2$ in $O(\log \Delta + \log^* n)$ time, and $O(\Delta^2)$-coloring of $G^2$ in $O(\Delta \cdot \log \Delta +  \log^*n)$ time.
\subsection{Related Work}
Among the first works on deterministic distributed symmetry breaking are algorithms for (1-hop) coloring and MIS of paths and trees. An $O(\log^* n)$-round algorithm for $3$-coloring paths was devised by Cole and Vishkin in 1986 \cite{CV86}. This was extended to oriented trees by Goldberg, Plotkin, and Shannon in 1987 \cite{GPS88}. A (1-hop) coloring deterministic algorithm for general graphs that employs $O(\Delta^2)$-colors and has running time $O(\log^* n)$ was obtained by Linial in 1987 \cite{L92}. This algorithm gives rise to $(\Delta + 1)$-coloring in $O(\Delta^2 + \log^* n)$ time. The running time for $(\Delta + 1)$-coloring was improved to $O(\Delta \cdot \log \Delta + \log^* n)$ by Szegedy and Vishwanathan in 1993 \cite{SV93}, and by Kuhn and Wattenhofer in 2006 \cite{KW06}, by a more explicit construction. This was further improved to $O(\Delta + \log^*)$ by Barenboim, Elkin and Kuhn in 2009 \cite{BE09}. The latter result also implies an algorithm for MIS in $O(\Delta + \log^*n)$ rounds. As proven in \cite{BBHORS19}, this result for MIS is tight. On the other hand, sublinear-in-$\Delta$ algorithms for $(\Delta + 1)$-coloring are possible, as shown by Barenboim \cite{B15}, who devised such an algorithm with running time $\tilde{O}(\Delta^{3/4} + \log^*n)$. This was further improved in several works \cite{BEG18,FHK16,FK23,M21}, to the current state of the art, which is $\tilde{O}(\sqrt{\Delta} + \log^* n)$.

In addition to the thread of research on algorithms with dependency on $\Delta$ in the running time, there has been also progress with deterministic algorithms that depend on $n$ (in a stronger way than just $\log^* n$). Several results were obtained using network decompositions \cite{AGLP1989, PS1996, GGR21}, and the recent breakthrough of Rozhon and Ghaffari \cite{GGR21} makes it possible to compute coloring and MIS deterministically within $\mbox{poly}(\log n)$ time. In addition, it is even possible to obtain $(\Delta+1)$-coloring and MIS without network decompositions in $O(\log n \log^2 \Delta)$ time \cite{GK2022, FGGKR2023} and $(\log^2 n)$ for MIS \cite{GG2023}. Both threads of research 
are very important and attracted much attention of researchers. In the case where the dependency on $n$ is larger than $\log^* n$, a main goal is improving this dependency, as well as the dependency on $\Delta$. In the case that the dependency on $n$ is $O(\log^* n)$, which is unavoidable, the main goal is improving the dependency on $\Delta$. 

Randomized symmetry breaking algorithms have been very extensively studied as well. The first algorithms for $(\Delta + 1)$-coloring and MIS, due to Luby \cite{L86} and Alon, Babai and Itai \cite{ABI86} required $O(\log n)$-time. This was improved by Kothapalli, Scheideler, Onus and Schindelhauer in 2006 \cite{KSOS06} who obtained $O(\Delta)$-coloring in $\tilde{O}(\sqrt {\log n})$ rounds. This was further improved by Barenboim and Elkin \cite{BE10} who obtained algorithms with running time  $O(\mbox{polylog}(\Delta) + 2^{O(\sqrt {\log \log n})})$  for various symmetry breaking problems, including coloring and MIS. In several major advances, this was improved even further to $O(\log \Delta + \mbox{poly}(\log \log n))$ for MIS  \cite{G16,RG20}, $O(\sqrt{\log \Delta} + \mbox{poly}(\log \log n))$ for coloring \cite{HSS16}, and $O(\mbox{poly}(\log \log n))$ for coloring \cite{CLP20}. 

Since currently-known randomized algorithms have better dependency on $\Delta$, while deterministic algorithms have better dependency on $n$, the improvement of either randomized or deterministic solutions is valuable. In particular, improving the dependency on $\Delta$ in deterministic algorithms is very important, since the gap between the current deterministic and randomized solutions is quite large.

Recently, distance-$k$ problems and computations on power graphs attracted much attention in the research of distributed algorithms. Emek, Pfister, Seidel and Wattenhofer \cite{EPSW2014} proved that every problem that
can be solved (and verified) by a randomized anonymous algorithm can also be solved by a deterministic anonymous
algorithm provided that the latter is equipped with a distance-$2$ coloring of the input graph. 
Computing distance-$k$ coloring is a key component in the derandomization of LOCAL distributed algorithms, due to Ghaffari, Harris and Kuhn \cite{GHK18} from 2019. Upper- and lower-bounds for approximate Minimum Dominating Set on power graphs were devised by Bar-Yehuda, Censor-Hillel, Maus, Pai and Pemmaraju \cite{BCMPP2020} in 2020.  Deterministic and randomized distance-$2$ coloring algorithms were obtained by Halldorsson, Kuhn and Maus \cite{HKM2020}. Improved randomized results for distance-$2$ coloring were obtained by Halldorsson, Kuhn, Maus and Nolin \cite{HKMN2020}. Very recently, in 2023, Maus, Peltonen, and Uitto \cite{MPU2023} devised deterministic algorithms for $k$-ruling sets on $G^k$ with time $\tilde{O}(k^2 \log^4 n \log \Delta)$. They also devised randomized algorithm for this problem, as well as for MIS, with logarithm dependency on $\Delta$ and poly-log-log dependency on $n$. The most recent result for 2-distance coloring is a randomized algorithm by Flin, Halldorson and Nolin \cite{FHN2023}, whose running time is $O(\log ^{6} \log n)$.

\section{Distance-$2$ coloring $G$ with $O(\Delta^4)$ colors in $O(\log \Delta + \log^* n)$ rounds}\label{Linal2Distance}
In this section we devise an algorithm for distance-2 coloring of $G$ using $O(\Delta^4)$ colors, which is a distance-1 coloring of $G^2$. Our algorithm significantly speeds-up the previously-known algorithms for distance-2 coloring with this number of colors. The previous algorithms \cite{HKM2020} are based on simulating Linial's \cite{L92} algorithm in $G^2$. (See also \cite{BE20}, for more details about the original algorithm of Linial \cite{L92}.) The algorithm of \cite{HKM2020}  for $G^2$ requires spending $\Delta$ rounds to simulate certain rounds of Linial's algorithm. On the other hand, our new method improves this, so that only $O(\log \Delta)$ rounds are required to simulate a round of Linial's algorithm. Moreover, just a constant number of rounds of Linial's algorithm need to be simulated with $O(\log \Delta)$ rounds, and the other rounds can be simulated with $O(1)$ rounds each.
Consequently, our algorithm produces a proper $O(\Delta^4)$ coloring of $G^2$ within $O(\log \Delta + \log^* n)$ rounds in the CONGEST model. 

The principle of our method is binary search. Consider the $i$-th round, $i = 1,2,...,O(\log^* n)$, of Linial's original algorithm. Each vertex $v$ generates a subset of possible colors
$S(\phi(v)) = \{s_1, s_2, ... , s_j\}$ from the palette of  $\{1,2,... (\Delta \cdot \log^{(i)} n)^2\}$, such that there exists a color $s_j\in S(\phi(v))$, where $s_j \notin S(\phi(u))$, for all neighbors $u$ of $v$.
Moreover, Linial's algorithm makes it possible to construct a set system, such that for any pair of neighbors $u$,$v$, it holds that

$\frac{ |S(\phi(u))| }{ |S(\phi(u)) \cap S(\phi(v))|}   > \Delta$
and
$\frac{ |S(\phi(v))| }{ |S(\phi(u)) \cap S(\phi(v))|}  > \Delta$.

 Each vertex $v$ selects such a color $s_j$, which is from its own set $S(\phi(v))$, but does not belong to any of its neighbors sets, from which the neighbors select colors. Consequently, the coloring is proper. Since in each round the subsets are taken from smaller sets, the number of colors is reduced in each round.
 We show that the element $s_j$ can be found using binary search, without knowing the neighbors sets $S(\phi(u))$, but only knowing the number of intersections with neighbors' sets in a specific range. When solving 2-distance-coloring using this idea in a straightforward way, each node needs to receive all the subsets from 2-distant neighbors. This causes all nodes to send messages with size of at least $O(\Delta)$ for each node in order to receive messages with all of its distance-$2$ neighbors colors. (Each neighbor of a given vertex sends it the colors of all its own neighbors.) Using this information a vertex can compute the available color to choose. However, this approach exchanges much more information than needed and can be optimized for restricted bandwidth models. We provide this optimization in Section \ref{LinialHighLevelDescription}.

\subsection{High level description}\label{LinialHighLevelDescription}

Our technique does not use an ordinary set, but an ordered set, thus we can perform a binary search. 
The goal of the binary search, for a vertex $v \in V$, is finding an element in $S(v)$, that does not belong to any set $S(u)$, for  $u$ in the 2-hop-neighborhood of $v$. To this end, for each vertex we define a range, that initially contains all elements of $S(v)$. Then we reduce the range size by a factor of at least 2 in each stage of the binary search. Eventually, each $v \in V$ reduces its range to contain a single element that does not belong to any range in its 2-hop-neighborhood. However, there is a cost in running time because each binary search requires $O(\log k)$ rounds where $k$ is the size of the colors palette of the current stage. 

The technique in high level is that each vertex knows its 1-distance neighbors subsets that are based on the coloring $\phi(u)$ and marked $S(\phi(u))$ and computes the number of intersections with these subsets.
Each node holds two indices that constitute the beginning and end of the relevant range. The neighbors are aware of those indices and refer to the beginning index by {\em left} and the ending index by {\em right}. On every iteration the number of intersecting values of the subsets is reduced by a factor of at least two, simply by counting the elements in each half of a range and choosing the half-range with less intersections.
On each round each vertex receives the number of intersecting values for both their left half and right half from its neighbours and decides whether in the next round it will use the left half of the range or the right half, and update its $left,right$ indices accordingly. The selected half of the range is the one with fewer intersecting values.

Next, we provide the pseudocode of our algorithm (see Algorithm 1 below), called {\em 2-Distance-Linial}, and analyze its correctness and running time.



\begin{algorithm}
	\begin{algorithmic}[1]
    \STATE Let $t$ be the size of each set in the current phase /* All sets have the same size in the same phase */
    
    \STATE Let $S(v) = S(\phi(v)) = \{ s_1(v),s_2(v),...,s_t(v) \}$ be the ordered set produced by the algorithm of Linial for the vertex $v$. /* $S(\phi(v))$ is computed locally by Linial's algorithm as a function of $\phi(v)$. */
    
    \STATE Let $S(u_i) = S(\phi(u_i)) = \{ s_1(u_i),s_2(u_i),...,s_t(u_i) \}$
     
    \STATE /* From Linial's proof we know that every $S$ contains an element $s_i$, such that $s_i$ belongs to $S$ and does not belong to any other $S(\phi(u_j))$ with $\phi(u_j) \neq \phi(u_i)$. */
    
    \STATE The vertex v performs a binary search in the set  S, with assistance of its neighbors, as follows:
    
    \STATE $left = 1, right =t$

        \WHILE{$left \neq right$} 
        
        \FOR {all $u,w \in \Gamma^+(v)$ in parallel}
        
        \STATE $Int_l(w ,u)$ = Number of intersections of $S(w)$ and $\{ s_i(u) \ | \ i \in [left,...,\frac{right}{2}  -1] \}$
        
        \STATE $Int_r(w ,u)$ = Number of intersections of $S(w)$ and $\{ s_i(u) \ | \ i \in [\frac{right}{2}  , right] \}$
        
        \ENDFOR

        \STATE 
        The node $v$ sends in parallel to all $u \in \Gamma(v)$ the sums: $\sum_{w \in \Gamma^+(v)} Int_l(w ,u)$ and $\sum_{w \in \Gamma^+(v)} Int_r(w ,u)$ 

        \STATE The node $v$ receives from all its neighbors $u$ the messages $sum_l(u ,v)$ and $sum_r(u ,v)$.

        \STATE The vertex $v$ computes the sums  $sum_l = \sum_{u \in \Gamma(v)} sum_l(u,v)$ and $sum_r = \sum_{u \in \Gamma(v)} sum_r(u,v)$. 
            \IF {$sum_r \geq sum_l$}
            
                \STATE $right = \frac{right}{2} -1$
                
                \STATE Send to all neighbors "left chosen"
        
            \ELSE 
            
                \STATE $left = \frac{right}{2}$
                
                \STATE Send to all neighbors "right chosen"
                
            \ENDIF
            
            \STATE The new set $S(v)$ for the next phase of the binary search is $S(v) = \{ s_{left}(v),...,s_{right}(v) \}$
            
            \STATE Receive all neighbors [left $\backslash$  right] choices and compute $S(u)$ for all $u \in \Gamma(v)$
        
        \ENDWHILE

        \STATE return the color $s_{left}(v)$  \ \ \ \ \ \ /* now $s_{left}(v)= s_{right}(v)$ */
        
        \end{algorithmic}
    \caption{2-Distance-Linial's algorithm phase} 
\end{algorithm}

\subsection{Proof and run time analysis}\label{LinialAnalisys}
\begin{lem}\label{linialEnoughColors}
	After each invocation of 2-Distance-Linial's algorithm phase the coloring $\phi$ remains proper.
\end{lem}
\begin{proof}
	From the original proof of Linial's algorithm, it follows that a subset $S(v)$ of possible colors generated according to Linial's original algorithm for each vertex necessarily contains at least one color $s_i \in S(v)$ that does not belong to any subset $S(u)$ of the neighbors $u$ at distance at most 2 from $v$. Moreover, the number of intersections (i.e., identical elements) between a vertex $v$ and all its distance-$2$ neighbors is smaller than $|S(v)|$. Denote the set of the smallest $ \lfloor \frac{1}{2}|S(v)| \rfloor$ elements of $S(v)$ by $S(v, Left)$ and the set of the other (larger) elements by $S(v, Right)$. It holds that if $sum_l \ge \lfloor \frac{|S(v)|}{2} \rfloor$ then $sum_r < \lceil \frac{|S(v)|}{2} \rceil$. Thus, at least one set ($S(v, Left)$ or $S(v, right)$) has less than half the conflicts of $v$ with its $2$-hop-neighbors. Moreover, the number of conflicts with that set is smaller than its size.
 
	If we continue with this argument recursively $\log |S(v)|$ times, then each time $(right - left)$ is reduced by a mutiplicative factor of $2$. In addition, the number of conflicts is reduced at least by a multiplicative factor of $2$ as well. Thus in each phase, either $S(v, Left)$ or $S(v, Left)$ contains an element with no conflicts in the $2$-hop-neighborhood. Such a subset is selected for the next phase. Therefore, after $\log{|S(v)|}$ phases, for each $v \in V$, an index is found with $right = left$ and $sum_l = sum_r = 0$, and there are no conflicts with the $2$-hop-neighborhood of $v$ at that index. 
\end{proof}

\begin{lem}\label{LinialRunningTime}
	The running time of 2-distance-Linial in the CONGEST model is $O(\log^{*}n \cdot \log \Delta + \log{\log{n}})$.
\end{lem}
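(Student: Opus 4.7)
The plan is to bound the total running time by summing, over all phases of Linial's outer iteration, the cost of one binary-search execution described in Algorithm~4.

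First I would recall the structure of Linial's outer process. The algorithm proceeds in $O(\log^{*} n)$ outer phases; if the palette at the beginning of phase $i$ has size $k_{i-1}$, then each vertex constructs a set $S(v)$ of size roughly $(\Delta \cdot \log^{(i-1)} n)^{2}$, and the palette in the next phase drops to $O((\Delta \cdot \log^{(i)} n)^{2})$. Correctness of each phase is exactly Lemma~\ref{linialEnoughColors}. Hence the number of outer phases needed to bring the palette down to $O(\Delta^{4})$ is $O(\log^{*} n)$, as in the original analysis.

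Next I would bound the cost of a single phase. One phase is exactly the \texttt{while}-loop in Algorithm~4. Each iteration of the loop costs $O(1)$ CONGEST rounds: the only information exchanged between neighbors is the two intersection counts $Int_{l}(v,u)$ and $Int_{r}(v,u)$, which are integers in $[0, |S(v)|] \subseteq [0,n]$ (so $O(\log n)$ bits), together with a $1$-bit ``left/right'' decision. Since $(right - left)$ halves in every iteration, the loop terminates after $O(\log |S(v)|)$ iterations. Using $|S(v)| = O((\Delta \cdot \log^{(i-1)} n)^{2})$, the cost of phase $i$ is therefore
\[
 O\!\left(\log \bigl(\Delta \cdot \log^{(i-1)} n\bigr)\right) \;=\; O\!\left(\log \Delta + \log \log^{(i-1)} n\right)
 \;=\; O\!\left(\log \Delta + \log^{(i)} n\right).
\]

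Finally I would sum over all phases. The total running time is
\[
 \sum_{i=1}^{O(\log^{*} n)} O\!\left(\log \Delta + \log^{(i)} n\right)
 \;=\; O(\log^{*} n \cdot \log \Delta) \;+\; \sum_{i=1}^{O(\log^{*} n)} O\!\left(\log^{(i)} n\right).
\]
The second sum is dominated by its first term, since $\log^{(i)} n$ shrinks at an iterated-logarithmic rate and hence $\sum_{i \ge 1} \log^{(i)} n = O(\log \log n)$ (the tail beyond $i = 2$ contributes only $O(\log \log \log n)$, which is absorbed). This yields the claimed bound $O(\log^{*} n \cdot \log \Delta + \log \log n)$.

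The only delicate step is verifying that a binary-search iteration really fits within $O(1)$ CONGEST rounds; this relies on the fact that the two intersection counts — the sole non-trivial messages sent along each edge — are bounded by $n$ and therefore expressible in $O(\log n)$ bits, so each vertex can exchange them with all its neighbors in parallel in a single round. Once that bookkeeping is in place, the rest is the geometric/iterated-log summation above.
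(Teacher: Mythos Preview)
Your approach is the same as the paper's: each binary-search iteration is $O(1)$ CONGEST rounds, a phase costs $O(\log|S(v)|)$, and one sums over the $O(\log^* n)$ outer phases. There is, however, an indexing slip in the final summation. As written, phase~$i$ costs $O(\log\Delta + \log^{(i)} n)$, so your iterated-log sum begins at $i=1$ and includes the term $\log^{(1)} n = \log n$; hence $\sum_{i\ge 1}\log^{(i)} n = \Theta(\log n)$, not $O(\log\log n)$ as you assert. (Your own sentence ``the second sum is dominated by its first term'' already says this---the first term is $\log n$.) The fix, which is what the paper does, is to note that already in the very first phase $|S(v)| = O(\mathrm{poly}(\Delta)\cdot\log n)$, not $O((\Delta\cdot \log^{(0)} n)^2) = O((\Delta n)^2)$ as your formula would give; thus phase~1 already costs only $O(\log\Delta + \log\log n)$, and the relevant sum is $\sum_{i\ge 2}\log^{(i)} n = O(\log\log n)$. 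With this off-by-one corrected, your argument matches the paper's proof exactly.
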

\begin{proof}
	First we compute the running time of a single iteration (lines 7-24). Denote the number of colors produced in the current phase of Linial's algorithm by $O(t^2)$, and the size of the set produced in line 2 by $t$. Consequently, each message sent in line 12 requires $2 \cdot \log t$ bits, i.e., $\log t$ bits for sending the sum for the left half, and $\log t$ bits for sending the sum for the right half. Thus, such messages can be sent in one round in the CONGEST model. It takes $O(\log t)$ rounds for $left, right$ to become equal to one another. Thus, the number of a phase if $O(\log t)$.
	Next, we compute the running time of the entire algorithm, using the ranges of colors computed by the algorithm. The number of colors in the first phase is $O(\Delta^4 \cdot \log^2 n)$, in the second phase it is $O(\Delta^4 \cdot \log^2 \log n)$ ,..., after $\log^*n$ phases it is $O(\Delta^6)$, and in the final phase it becomes an $O(\Delta^4)$-proper-coloring. So the first phase running time is $O(\log ({\Delta^4 \cdot \log^2 n})) = O(\log {\Delta} + \log \log n)$, the second phase running time is $O(\log ({\Delta^4 \cdot \log^2 \log n})) = O(\log {\Delta + \log^{(3)} n})$, etc. After $\log^*n$ phases the overall running time is $O(\log^*n \cdot \log \Delta + \log \log n)$. Note that  $\sum_{i = 2}^{\log^* n}{\log^{(i)}n} = O(\log \log n)$.
\end{proof}

Next, we provide an improvement, which removes the $O(\log \log n)$ factor from Lemma \ref{LinialRunningTime}.
To this end, we perform each binary search for $O(\log \Delta)$ phases, rather than $O(\log t)$. Moreover, in each phase we send just one bit to indicate whether the left half is chosen or the right one, rather than sending indices of ranges. This information is sufficient to compute the new range from the previous one. After $O(\log \Delta)$ phases, we obtain a consecutive range of size $O(t/\Delta^2) = O(\log n)$. Recall that the initial range size is $t = O(\Delta^2 \cdot \log n)$, which is a square root of the number of colors. After $O(\log \Delta)$ phases, the new range defines a bit-string of size $O(\log n)$ that represents whether there is a conflict for each element in $\{s_{left}(v),...,s_{right}(v)\}$. This string is then sent directly to $v$ by each of its $1$-hop-neighbors. This is done using $O(\log n)$-bits messages, within one round in parallel by all neighbors. Then $v$ finds an index $i \in O(\log n)$ of a bit with a $0$ value, which exists since there is an index without conflicts in the range $\{s_{left},...,s_{right}\}$. The resulting color with no conflicts is $s_{left + i}(v)$. We summarize this in the next theorem.
\begin{thm} \label{resultalglin}
A proper distance-$2$ coloring with $O(\Delta^4)$-colors can be computed in $O(\log \Delta \cdot \log^* n)$ rounds in the CONGEST model.
\end{thm}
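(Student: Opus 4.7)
The plan is to accelerate the binary-search Linial phase of Lemma~\ref{LinialRunningTime} so that each Linial stage costs $O(\log \Delta)$ rounds, removing the additive $\log\log n$ term. The idea is to truncate the binary search after $O(\log \Delta)$ halving phases, rather than continuing until $\mathit{left} = \mathit{right}$, and then finish the selection in a single additional CONGEST round by sending a compact bit-vector that encodes the remaining conflict pattern.

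First I would show that truncating the binary search at depth $\lceil 4 \log \Delta \rceil$ still leaves a conflict-free color inside the surviving range. In the $i$-th Linial stage the palette has size $t \leq O(\Delta^4 \cdot \log^{(i)} n) \leq O(\Delta^4 \cdot \log n)$, so after this many halvings the range has size $O(\log n)$. The invariant established in the proof of Lemma~\ref{linialEnoughColors} --- that the number of $2$-hop conflicts inside the current range is strictly less than the range size --- is preserved by each halving: one of the two halves contains at most half of the conflicts, and its size is exactly half of the range, so at least one conflict-free entry survives all $O(\log \Delta)$ halvings. During these phases I would replace the $O(\log t)$-bit range-index messages by a single ``left/right'' bit per phase, since each vertex already knows the previous $(\mathit{left}, \mathit{right})$ of its $1$-hop-neighbors and can update them from a single bit. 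Each halving phase therefore costs $O(1)$ CONGEST rounds, giving $O(\log \Delta)$ rounds total for the truncated search.

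The finishing step is a single CONGEST round that identifies a conflict-free color in the residual $O(\log n)$-sized range. For each $1$-hop-neighbor $u$ of $v$, $u$ locally computes an $O(\log n)$-bit vector whose $j$-th bit indicates whether the $j$-th surviving palette entry of $v$ collides with $S(w)$ for some $w \in \Gamma(u)$; this uses only information $u$ already holds from the preceding phases. $u$ sends this vector to $v$ in one $O(\log n)$-bit message, $v$ ORs the incoming vectors to obtain the exact $2$-hop conflict pattern over its surviving range, and picks any index with a $0$-bit, which is guaranteed to exist by the invariant above. Summing over the $O(\log^* n)$ outer Linial stages yields $O(\log \Delta \cdot \log^* n)$ rounds total, as claimed.

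The main obstacle will be the truncation argument --- verifying that exactly $O(\log \Delta)$ halvings (rather than $\log t$) leave the range at the ``sweet spot'' of size $O(\log n)$, so the final bit-vector fits in a single $O(\log n)$-bit CONGEST message while still containing a guaranteed conflict-free entry. This balances three quantities simultaneously: the initial palette bound $t = O(\Delta^4 \log n)$ inherited from Linial's set system, the halving depth needed to shrink the range to $O(\log n)$, and the halving depth needed to drive the conflict count strictly below the range size. All three are met by taking exactly $\lceil 4 \log \Delta \rceil$ halvings and then one CONGEST round; everything else is a direct reuse of Lemmas~\ref{linialEnoughColors} and~\ref{LinialRunningTime}.
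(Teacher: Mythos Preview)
Your proposal is correct and matches the paper's own argument essentially line for line: truncate the binary search after $O(\log\Delta)$ halvings (you even pin down the constant $4$ that the paper leaves implicit), replace the range-index messages by single left/right bits, and finish each Linial stage with one CONGEST round in which every $1$-hop neighbor ships an $O(\log n)$-bit conflict vector over the surviving range so that $v$ can OR them and pick a $0$-entry. The only additional content in your write-up is the explicit restatement of the ``conflicts $<$ range size'' invariant from Lemma~\ref{linialEnoughColors} to certify that a conflict-free entry survives the truncation, which the paper handles by the same reference.
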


Theorem \ref{resultalglin} demonstrates that for each of the $\log^* n$ iterations of Linial's algorithm, $O(\log \Delta)$ rounds are performed to compute the color for the next iteration. We now argue that it is sufficient to perform just two iterations with $O(\log \Delta)$ rounds, while the remaining $O(\log^* n)$ iterations require $O(1)$ rounds each. The idea is similar to an improvement from $O(\Delta \log^* n)$ to $O(\Delta + \log^* n)$ of \cite{HKM2020}. Specifically, after $2$ iterations the number of colors becomes $O(\Delta^4  \log^2 \log n)$. If $\Delta^4 > \log^2 \log n$, then this is an $O(\Delta^8)$-coloring. 
It can be converted into an $O(\Delta^4)$-coloring within a single iteration, using a field of size $\Theta(\Delta^2)$, and polynomials of degree 4. Indeed, each of the current $O(\Delta^8)$ colors can be assigned a unique polynomial, and each such polynomial has at least one non-intersecting point with any $\Delta^2$ others. 
These are used for computing new colors in a range of size $O(\Delta^4)$. The other possibility is that $\Delta^4 \leq \log^2 \log n$. Then, instead of performing a binary search, one can directly send $\Delta$ messages, each of which consists of a color in a range of $O(\Delta^4  \log^2 \log n) = O(\log^4 \log n)$.
Since $\Delta \leq \sqrt{\log \log n}$, the number of bits in the concatenation of these $\Delta$ messages is $\mbox{poly} (\log \log n)$, so it can be sent over an edge within $O(1)$ rounds of the CONGEST model. 

\begin{col} \label{colresultalglin}
A proper distance-$2$ coloring with $O(\Delta^4)$-colors can be computed in $O(\log \Delta + \log^* n)$ rounds in the CONGEST model.
\end{col}

Note: when counting the number of intersection in a 2-hop-neighborhood of a vertex $v \in V$, some intersections may be counted more than once. For example, suppose that $v$ has two neighbors $u,w$, and those vertices have a common neighbor $z$ at distance $2$ from $v$. Then the number of intersections of $v$ and $z$ would be counted at least twice. (Because each of $u$,$w$ counts these intersections.) Nevertheless, the overall number of counted intersections of $v$ is bounded by the product of the number of edges adjacent to neighbors of $v$ and the maximum number of intersections of $v$ with another vertex.


\noindent This is smaller than $S(v)$, because the number of edges adjacent to $v$'s neighbors is at most $\Delta^2$. We elaborate on the issue of overcounting in more general scenarios in Section 4.3.


\section{Distance-$2$ coloring with $O(\Delta^2)$ colors in $O(\Delta  \cdot \log{\Delta} + \log^* n)$ rounds} \label{secfastcol}

\subsection{High level description}

In this section we provide an improved algorithm for distance-$2$ coloring of $G$ using $O(\Delta^2)$ colors, which is distance-1 coloring of $G^2$. The improvement is from $O(\Delta^2 + \log^* n)$ rounds to $O(\Delta \cdot \log  \Delta + \log^* n)$ rounds in the CONGEST model. 
(A procedure that requires $O(\Delta^2 + \log^* n)$ rounds is provided in Appendix A, for completeness.)
The improved result is achieved by applying our technique to the algorithm of \cite{BEG18}  that provides an $O(\Delta)$ coloring of an input graph $G$ in $O(\sqrt \Delta \log {\Delta} + \log^{*} n)$ rounds. 
This gives rise to $O(\Delta^2)$-coloring of $G^2$ in $O(\Delta \cdot \log  \Delta + \log^* n)$ rounds.

The algorithm of \cite{BEG18} is based on the following notions.\\
A {\em $p$-defective coloring} is a vertex coloring such that each vertex may have up to $p$ neighbours with its color.\\
An {\em $p$-arbdefective coloring} is a vertex coloring, such that each subgraph induced by a color class of the coloring has arboricity bounded by $p$. The {\em arboricity} is the minimum number of forests into which the edge set of a graph can be decomposed.

The algorithm of \cite{BEG18} for distance-$1$ coloring consists of three stages:\\
1. Computing $O(\sqrt{\Delta})$-defective $O(\Delta)$-coloring of $G$ in $O(\log^{*} n)$ time.
\newline
2. Computing $O(\sqrt{\Delta})$-arbdefective $O(\sqrt{ \Delta})$-coloring of $G$ in $O(\sqrt \Delta)$ time.
\newline
3. Iterating over the $O(\sqrt{\Delta})$ color classes of step 2, and computing a proper coloring of $G$ iteratively. Each iteration of Stage 3 requires a constant number of rounds.

As a first step of our extension of this scheme to work in $G^2$, we introduce the following {\em proxy} communication method. The goal is establishing a single path between any pair of vertices that need to communicate, and are at distance at most 2 one from another. (In section \ref{speedup}, we generalize this to vertices of distance $k$ one from another.) This way, the desired information is passed only once, which improves communication costs, and avoids miscalculations caused by duplicated data. To this end, for each vertex, a BFS tree of radius two that is rooted at the vertex is computed. This computation starts with sending the list of neighbors of each vertex to all its neighbors. This requires $O(\Delta)$ rounds, because in each round each vertex sends the information to a neighbor that has not been sent yet.  Then, each vertex is aware of the neighbors of its neighbors. For any distance-2 neighbor, it knows the immediate neighbors that connect to it, and selects exactly one of these neighbors. The selected neighbor is referred to as {\em proxy}. Note that a vertex $v$ with its proxy nodes and their neighbors form a BFS of radius two that contains the two-hop neighborhood of $v$.


In Sections \ref{runningArbLinial} - \ref{arbLinial}, we describe the generalizations and modifications in the above-mentioned stages (1) - (3), for coloring $G^2$ in $O(\Delta \cdot \log \Delta +\log^{*} n)$ rounds in the CONGEST setting. For these computations we need the above-mentioned notion of proxy nodes.

The running time analysis assumes that the proxy nodes have already been computed. Otherwise, an $O(\Delta)$ term should be added. However, this does not affect the overall running time of the entire algorithm of Section 3, which is $O(\Delta \log \Delta + \log^* n)$.

\subsection{Detailed description of the algorithm}

\subsubsection{Our variant for distance-$2$ defective coloring}\label{runningArbLinial}
The computation of $O(\Delta)$-defective $O(\Delta^2)$-coloring of $G^2$ proceeds as follows. \newline
The procedure starts by computing a proper $O(\Delta^4)$-coloring of $G^2$, using Algorithm 1. Next, find a prime $q = \Theta(\Delta)$, such that the number of colors is bounded by $q^4$. Note that each color is represented by a tuple $\langle a,b,c,d \rangle$, where $a,b,c,d \in  Z_q$, and $Z_q$ is a field modulo $q$. Assign each color a unique polynomial $p(x) = a + b \cdot x + c \cdot x^2 + d \cdot x^3$. Assign each vertex a polynomial according to its color. We say that two polynomials $p(x),p'(x)$ {\em intersect} at the value $t$,$0 \leq t \leq q - 1$, if $p(t) = p'(t)$. Next, each vertex $v \in V$ finds a value $t, 0 \leq t \leq q-1$, such that $p(t)$ intersects with the minimum number of polynomials of vertices of distance at most 2 from $v$. This is done as follows, by a binary search. The vertex $v$ sends its polynomial to its $1$-hop neighbors. Each of these neighbors $u \in \Gamma(v)$ computes the number of intersections of $v$'s polynomial with polynomials of neighbors $w$ of $u$, such that $u$ is the proxy for $\{v,w\}$. In addition, each  $u \in \Gamma(v)$ computes the number of intersections of its polynomial with that of $v$. The number of intersections is computed for each half of the range $\{0,1,....,q - 1\}$, i.e., $\{0,1,....\left \lceil q/2 \right \rceil \}$ and $\{\left \lceil q/2 \right \rceil + 1, \left \lceil q/2 \right \rceil + 2,.... q-1]\}$. This information is returned to $v$ by all its $1$-hop neighbors. Then $v$ knows how many intersections with its $2$-hop neighbors its polynomial has in $\{0,1,....\left \lceil q/2 \right \rceil \}$ and $\{\left \lceil q/2 \right \rceil + 1, \left \lceil q/2 \right \rceil + 2,.... q-1]\}$. The half-range with fewer intersections is selected for the next iteration of the binary search. This is repeated for $\log q$ iterations, until the range contains a single element $t \in \{0,1,...,q - 1\}$. The color of $v$ returned by the procedure is $\langle t, p(t) \rangle$. This completes the description of the procedure. Its correctness and running time are analyzed below.

\begin{lem}
The procedure computes an $O(\Delta)$-defective $O(\Delta^2)$-coloring of $G^2$.
\end{lem}
\begin{proof}
Every pair of vertices $u,v$ at distance at most $2$ one from another obtain distinct colors, according to the $O(\Delta^4)$-coloring of  $G^2$. Thus, $u,v$ are assigned distinct polynomials $p(u) \neq p(v)$. The polynomials are of degree 3. Therefore, each vertex has at most $3$ intersections of its polynomial with each of its 2-hop neighbours' polynomials. Since $v$ has at most $\Delta^2$ neighbors in its $2$-hop-neighborhood, each vertex has $O(\Delta^2)$ possible intersections in its polynomial. Each node has $q = O(\Delta)$ possibilities for selecting $\langle k, p(k) \rangle$, since $k \in \{0,1,2...,q-1\}$. Thus, by the pigeonhole principle, there must exist an element  $\langle k, p(k) \rangle$ with at most $O(\Delta)$ intersections of the polynomial of $v$ with all its neighbors' polynomials.

Next, we show that the number of intersections with polynomials in $2$-hop-neighborhoods received by each vertex is correct.
Assume that a vertex $v$ receives messages from its 1-hop neighbors $u_1,u_2,...,u_{\Delta}$ with the values $k_1,..,k_{\Delta}$, respectively. It holds that each vertex at distance $2$ from $v$ has exactly one proxy node in $\{u_1,u_2,...,u_{\Delta}\}$.
Thus each intersection of the polynomial of $v$ with a polynomials of a vertex in $v$'s $2$-hop-neighborhood is counted exactly once, by the proxy node of that vertex.
Assume that when a vertex $v$ chooses a half, without loss of generality, it is the left half.
In this case, according to our algorithm, that left half has less conflicts than the right half. This means that the number of intersections in the half-range is smaller by at least a factor of 2 than the number in both halves of the range.  Therefore, once the range size is reduced by a factor of $q$, the number of intersections must also reduce by a factor of at least $q$. Since $q = O(\Delta)$ and we have already shown that the initial number of intersections is $O(\Delta^2)$,  when the range size becomes 1, there are at most $O(\Delta)$ intersections. Therefore, the resulting coloring is $O(\Delta)$-defective $O(\Delta^2)$-coloring.
\end{proof}

\begin{lem}
The running time of the procedure is $O(\log \Delta + \log^* n)$.
\end{lem}
\begin{proof}
First, in order to compute $O(\Delta^4)$ coloring we employ Algorithm 1 that has running time $O(\log \Delta + \log^* n)$, by Corollary \ref{colresultalglin}. 
The remaining part of the  procedure is a  binary search on a range of size $O(\Delta^4)$, and thus requires $O(\log (\Delta^4)) = O(\log \Delta)$ phases, each of which consists of a constant number of rounds. The overall running time is $O(\log \Delta + \log^* n)$.
\end{proof}

\subsubsection{Algorithm for Distance-$2$ Arbdefective Coloring}\label{AGArbdefective}

For a graph $G$, given an $O(\Delta)$-defective $O(\Delta^2)$-coloring of $G^2$, we would like to produce an $O(\Delta)$-arbdefective $O(\Delta)$-coloring of $G^2$  within $O(\Delta)$ rounds.
The algorithm is as follows: 

This algorithm extends the ideas of \cite{BEG18} to work in $G^2$ in the CONGEST model. In that paper the authors devised an $O(\sqrt{\Delta})$-arbdefective $O(\sqrt{\Delta})$-coloring algorithm for $G$ with $O(\sqrt{\Delta} + \log^* n)$ rounds. The main idea of the algorithm is as follows. In each round, each vertex counts how many of its neighbors have the same color as its own. The number of such neighbors is the {\em number of conflicts}. If a vertex has too many conflicts, it selects a new color, using a certain function. Otherwise, the vertex finalizes its color. The original algorithm (see Section 6 in \cite{BEG18}) proceeds for $O(\sqrt{\Delta})$ rounds, and selects the round with the smallest number of conflicts. By the pigeonhole principle, there must be a round in which the number of color conflicts is at most $O(\sqrt{\Delta})$. However, 
computing the number of conflicts with all $2$-hop-neighbors is expensive, since each original round can take up to $O(\Delta)$ rounds, when applied in a straightforward way to $G^2$. 
To improve this, each vertex collects information about conflicts from its $2$-hop-neighborhood in a bit-efficient manner. Specifically, a vertex receives from each of its 1-hop neighbors the number of conflicts it has with 2-hop-neighbors, instead of lists of their colors. 
During the execution of the algorithm, in each iteration the conflicts are counted, and if the total number of 2-distance conflicts is below a predefined $t$, the vertex finalizes the current color.
For any $t \in [1, \Delta^2]$, this stage requires $O(\frac{\Delta^2}{t})$ time and it results in $O(t)$-arbdefective $O(\frac{\Delta^2}{t})$-coloring of $G^2$. Setting $t = O(\Delta)$ results in $O(\Delta)$-arbdefective $O(\Delta)$-coloring in time $O(\Delta)$.  See pseudocode of Algorithms 2 - 3 below. (Each color in an initial $O(\Delta^2)$-coloring is represented by an ordered pair $\langle a, b \rangle$, where $a,b \in O(\Delta)$. When Algorithm 3 terminates, the resulting color resides in the $b$-coordinate, and it is in the range $[0,1,...,O(\Delta)]$.) Next, we analyze Algorithm 3.


\begin{algorithm}
	\begin{algorithmic}[1]

    \STATE /* This procedure is performed internally by a vertex, within 0 rounds */
    \STATE numberOfConflicts = 0
    \FOR{$i = 1,2,...,n$}
	    \IF {$b_0$ = $b_i$}
	        \STATE numberOfConflicts = numberOfConflicts + 1
	    \ENDIF
	\ENDFOR
	\STATE return numberOfConflicts
    \end{algorithmic}
    \caption{Procedure number-of-conflicts  ($\langle a_0, b_0 \rangle$, { $\langle a_1, b_1 \rangle$) .. $\langle a_n, b_n \rangle$ )}}
\end{algorithm}
\begin{algorithm} 
	\begin{algorithmic}[1]
		
		\STATE We are given a $p$-coloring $\phi$. Denote $q$ as the smallest prime number such that $q \geq \sqrt{p}$. The parameter $maxDefect$ is the maximal arb-defect allowed for coloring.

		\STATE Denote $\phi(v) = \langle a , b \rangle$,  where $a, b \leq q$
		\WHILE {$\phi(v) \neq \langle 0, b \rangle$}
		
		\STATE Denote by $conflicts(v, u) \leftarrow$  number-of-conflicts($\phi(v)\bigcup_{u_i \in \Gamma(u)}\phi(u_i)$) 
		
		\FOR{$i = 1,2,..,deg(v)$ in parallel}  
  
            \STATE send to the $i$th neighbor of $v$, which is $u_i$, the message $conflicts(u_i, v)$ 

        \ENDFOR
  
		\STATE Receive all $conflicts(v, u_i)$  messages from neighbors 
		
		\IF {$\sum_{(v, u_i) \in E}{conflicts(v, u_i)} \leq maxDefect$}
		
			\STATE $\phi(v) = \langle 0, b \rangle$
			\STATE Send "Done" to all neighbors
		\ELSE
			\STATE $\phi(v) = \langle a, a + b \mbox{ mod } q\rangle$
			\STATE Send "Not done" to all neighbors
		\ENDIF
		
		\STATE Receive all "Done", "Not Done" messages from neighbors, and compute $\phi(u_i)$ for $i = 1,2,...,deg(v)$ 
		\ENDWHILE
	\end{algorithmic}
	\caption{2-Distance AG Arbdefective Coloring($maxDefect$ = t)} \label{alg2distarb}
\end{algorithm}


\begin{lem} \label{maxd}
After running 2-Distance AG Arbdefective Coloring for $\lceil 2\Delta^2/maxDefect \rceil$ rounds, 
all vertices have colors of the form $\langle 0, b \rangle$, and each color class has arboricity at most $maxDefect$ in $G^2$. 
\end{lem}
\begin{proof}
According to \cite{BEG18}, each vertex can have at most two conflicts with each of its $2$-hop-neighbors during execution. Thus, by pigeonhole principle, for each $v \in V$ there is a round $i \in \{1,2,...,\lceil 2\Delta^2/maxDefect \rceil \}$, such that $v$ has at most $maxDefect$ conflicts in its $2$-hop-neighborhood. In such round the color of $v$ becomes of the form $\langle 0, b \rangle$. For the sake of analysis, orient each edge $(v,u) \in E(G^2)$ towards the endpoint that changed its color to the form $\langle 0, b \rangle$ earlier. (Ties are broken using IDs, towards smaller IDs.) Consider the endpoint that obtained the form $\langle 0, b \rangle$ after the other endpoint, or in the same time. The number of conflicts at that time was at most $maxDefect$, hence this endpoint has at most $maxDefect$ outgoing edges to vertices with its $b$-value. Consequently, in the end of the execution, in each subgraph of $G^2$ induced by a color $i \in \{1,2,...,\lceil 2\Delta^2/maxDefect \rceil\}$ , each vertex has up to $maxDefect$ outgoing edges. Thus the arboricity of each such subgraph is at most $maxDefect$. (Assign each outgoing edge of a vertex a distinct label from $\{1,2,...,maxDefect\}$. When this is done for all vertices, the edges of the same label form a forest.)
\end{proof}

\subsubsection{Iterative Algorithm for Distance-$2$ Proper Coloring}\label{arbLinial}
In this subsection we describe an algorithm that produces an $O(\Delta^2)$-proper-coloring of $G^2$ within $O(\Delta \cdot \log \Delta + \log^* n)$ rounds. This algorithm is based on a the technique devised in \cite{BEG18}, but our algorithm extends this technique to work for $G^2$. For more details regarding this technique, see section 3 in \cite{B15}.
Our new algorithm starts with computing an $O(\Delta)$-arbdefective $O(\Delta)$-coloring $\varphi$ for $G^2$.
The coloring $\varphi$ constitutes a partition of the graph into $O(\Delta)$ color sets $V_1,V_2,...,V_d$, $d \in O(\Delta)$. Each color class is $O(\Delta)$-arbdefective. This means that the arboricity of a subgraph of $G^2$ induced by $V_j$, $j \in O(\Delta)$, is bounded by $O(\Delta)$. Moreover, each pair of vertices $u,v$ at distance at most 2 one from another have a parent-child relation in a certain forest. Specifically, when an arbdefective coloring is computed with Algorithm \ref{alg2distarb}, if $u,v$ terminate (arrive to step 10 of the algorithm) in distinct rounds, then the parent is the vertex that terminated earlier. Otherwise, the parent is the vertex with lower ID. Vertices do not have to know their parents explicitly.

The algorithm iterates over $i = 1,2,...,d$. In each iteration $i$ the algorithm computes a new color $\varphi'$ for all of the nodes with color $\phi(v)=i$, using at most $O(\Delta^2)$ colors. To this end, each vertex constructs a set of polynomials, and finds a polynomial $P$ in this set, such that:\\
(1) The number of intersections of the polynomial $P$ with colors $\varphi'$ of vertices in its 2-hop neighborhood that already selected such colors in previous rounds is as small as possible. \\
(2) The number of intersections of $P$ with polynomials of its parents in its $2$-hop-neighborhood that are active in the same round $i$ is as small as possible.

The construction of the polynomial set of a vertex $u \in V_i$ is performed as follows. 
Let $q = O(\Delta)$ be a prime, such that $q > c \cdot \Delta$, for a sufficiently large constant $c \geq 1$.
We represent the color $\varphi(u)$ by $\langle a,b \rangle$, where $0 \leq a,b < q$.
The set of polynomials of $u$ is $\{ a \cdot x^2 + b \cdot x + j \ | \ j = 0,1,...,q-1 \}$.
The number of polynomials in the set is $q = O(\Delta)$.

According to (1), our goal is finding a polynomial $P$ in the set of $u$, such that the number of vertices at distance at most $2$ from $u$ with the following property is minimized.

(*) For a vertex $w$ that already has a color $\varphi'(w) = \langle a_w',b_w' \rangle$, there exists $t \in {0,1,...,q-1}$, such that $\langle  a_w',b_w' \rangle = \langle t, P(t) \rangle$. 

According to the Pigeonhole principle, there must be a polynomial in the set of size $q > c \cdot \Delta$, for which at most $q/2$ vertices at distance at most $2$ satisfy this property. This is because each vertex satisfies this property for at most one polynomial in this set (the set consists of non-intersecting polynomials), and the number of vertices at distance at most $2$ is at most $q^2/2$. Our goal is finding such a polynomial.
The challenge is that when running a naive version of this algorithm in the CONGEST model, every vertex needs to know its 2-hop neighbours' colors. Sending this information requires $O(\Delta)$ rounds. Next we describe an optimization that requires only to compute how many intersections there are in sets of polynomials. This speeds up the running time from $O(\Delta)$ to $O(\log{\Delta})$.


Next, we describe how each vertex selects the desired polynomial $a \cdot x^2 + b \cdot x + j$ from its set within $O(\log \Delta)$ rounds. This is done using a binary search on $j$. To this end, each vertex has to inform its neighbors about its set of polynomials. Even though there are $q$ polynomials in the set, this is done just within one round, as follows. Given a set of polynomials  $\{ a \cdot x^2 + b \cdot x + j \ | \ j = 0,1,...,q-1 \}$ of a vertex $u$, only the coefficients $a,b$ are sent to the neighbors of $u$. (Each coefficient requires $O(\log \Delta)$ bits.) Then the neighbors can reconstruct the set of polynomials from $a,b$, since they know that $j$ runs from $0$ to $q - 1$.
Next, every vertex initialize $start=1$ and $end = q$ and defines two ranges. The ranges are $low = [start, \lceil \frac{end - start}{2} \rceil]$ and $high = [\lceil \frac{end - start}{2} \rceil + 1, end]$. At the first step each vertex sends to each of its neighbors $w$ the number of intersections of colors $\varphi'$ with polynomials that have $j$ in range $low$, as well as the number of intersections for $j$ in range $high$. 
In the next step each vertex receives from its neighbors the number of such intersections in ranges $low$ and $high$ in its 2-hop-neighborhood.
Then each vertex decides for its new $start$ and $end$ according to the half range in which there are fewer intersection with its polynomials. 
Consequently, after halving $O(\log{\Delta})$ times, the range contains just a single value $\hat{j}$. It defines a single polynomial from the set, which is $a\cdot x^2 + b \cdot x + \hat{j}$.

The next lemma provides a helpful property of the polynomials, which will assist us to compute the coloring of a set $V_i$, given colorings of $V_1,V_2,...,V_{i-1}$. Specifically, each vertex in $V_i$ can select a polynomial with sufficiently many coordinates that do not intersect with colors that heave already been chosen by its $2$-distance neighbors in $V_1,V_2,...,V_{i-1}$. Then, out of these coordinates, a coordinate that does not intersect with any selected polynomial in 2-distance in $V_i$ can be found. This is a generalization of ideas that appear in Section 3 of \cite{B15} and analyzed in Section 3.4 of \cite{B15}. (The results in \cite{B15} are for 1-distance coloring, while here we analyze 2-distance coloring.)

\begin{lem} \label{lem:iter}
Suppose that we are given a graph with $O(\Delta)$-arbdefective $O(\Delta)$-coloring $\varphi$ of $G^2$, that partitions the input graph into subsets $V_1,V_2,...,V_{O(\Delta)}$, according to color classes of $\varphi$. Moreover for an integer $i \geq 0$, suppose that we already have a proper 2-distance coloring $\varphi'$ for $V_1,V_2,...,V_{i-1}$. Then we can find a polynomial $P = a \cdot x^2 + b \cdot x + \hat{j}$ for each vertex $u$ with $\varphi(u) = i$, such that at least half of the elements in the set  $\{ \langle 0, P(0) \rangle, \langle 1, P(1) \rangle..., \langle q-1, P(q-1) \rangle \}$ do not appear as $\varphi'$ colors in the 2-hop neighborhood of $u$.
\end{lem}
\begin{proof}
This is shown by the pigeonhole principle. For a color class $i$ and a vertex $u$ with $\varphi(u) = i$, the number of 2-hop neighbors of $u$ that already computed the color $\varphi'$ is bounded by the size of the 2-hop-neighborhood, which is $\Delta^2$. 
Our algorithm finds a polynomial $P$ in the set of $u$ with the smallest number of elements $\langle t, P(t) \rangle$ that appear as $\varphi$ colors in the 2-hop-neighborhood of $u$. Moreover, for each pair of polynomials $P_1,P_2$ of $u$, and any $t',t''$, it holds that $\langle t', P_1(t') \rangle \neq \langle t'', P_2(t'') \rangle$ because all polynomials of $u$ do not intersect one with another. Hence the polynomial $P$ has at most $\Delta^2/q < \Delta$ elements as $\varphi'$ colors in the 2-hop-neighborhood of $u$.  On the other hand, $q > c \cdot \Delta$, for a sufficiently large constant $c$, thus at least half of the elements in the set  $\{ \langle 0, P(0) \rangle, \langle 1, P(1) \rangle..., \langle q-1, P(q-1) \rangle \}$ do not appear as $\varphi'$ colors in the 2-hop neighborhood of $u$,
\end{proof}

According to Lemma \ref{lem:iter}, it is possible to iterate over the color classes of the arbdefective coloring $\varphi$, for $i=1,2,...,O(\Delta)$.
In each iteration $i$, each vertex in the color class $i$ obtains a single polynomial with the properties stated in the lemma. Specifically, it has sufficiently many elements that still can be used for their $\varphi'$ color. Specifically, the number of elements is larger (by a factor greater than 2) than the number of their parents in $G^2$. Consequently, a variant of Linial's algorithm that considers only parents in the $2$-hop-neighborhood can be executed. (For the case of distance-1 coloring, this is a well-known extension of Linial's algorithm, which is called arb-Linial \cite{BE08}.) In the case of distance-2 coloring it can be computed in $O(\log \Delta)$ phases in the same way as in Corollary \ref{colresultalglin}, but considering only 2-hop-parents, rather than entire 2-hop-neighborhood. (Recall that a vertex can deduce the parent-child relationship of a pair of its neighbors, by inspecting their termination round in Algorithm 3 and their IDs.) The result of this computation is a single coordinate of the polynomial, which gives rise to a color $\varphi'$ for the vertex, which is distinct from all $\varphi'$-colors in its $2$-hop-neighborhood. 
Hence, we obtain the following Corollary.

\begin{col} \label{square-col}
It is possible to compute a proper distance-$2$ coloring with $O(\Delta^2)$ colors within
$O(\Delta \cdot \log \Delta + \log^* n)$ rounds in the CONGEST model.

\end{col}

\subsubsection{Coloring $G^2$ using $(\Delta^2 + 1)$ colors in $O(\Delta^{\frac{3}{2}} \cdot \log \Delta + \log^* n)$ rounds}
In this section we show how to reduce the number of colors to $(\Delta^2 + 1)$. To this end, we parameterize the steps of our scheme in a different way, as follows. \newline
1. Compute an $0$-defective $O(\Delta^2)$-coloring of $G^2$, i.e., a proper $O(\Delta^2)$-coloring.
\newline
2. Compute $O(\sqrt{\Delta})$-arbdefective $O(\Delta^{3/2})$-coloring of $G^2$.
\newline
3. Iterate over the $O(\Delta^{3/2})$ color classes that were generated in step 2, and compute a proper coloring of $G^2$ iteratively, using $\Delta^2 + O(\Delta^{3/2})$ colors.
\newline
4. Apply a simple reduction to produce a $(\Delta^2 + 1)$coloring of $G^2$ from a $\Delta^2 + O(\Delta^{3/2})$-coloring.

The steps are performed as follows. Step 1 is obtained by applying Corollary \ref{square-col}. This step requires $O(\Delta \cdot \log \Delta + \log^* n)$ rounds. Step 2 is obtained by applying Lemma \ref{maxd} with $maxDefect = \sqrt \Delta$ and  $q = \Theta(\Delta^{3/2})$. This step requires $O(\Delta^{3/2})$ rounds. Step 3 is performed similarly to Section \ref{arbLinial}, but now we have $O(\Delta^{3/2})$ color classes to iterate on, rather than $O(\Delta)$. On the other hand, the arboricity of each of them is significantly smaller, and consequently a proper coloring from a range of size $\Delta^2 + O(\Delta^{3/2})$ can be computed. To this end, let $q$ be a prime, such that $q > \Delta + O(\sqrt{\Delta})$, $q < 2 \cdot \Delta + O(\sqrt{\Delta})$. Then each vertex $v$ in $G^2$ has at most $\Delta^2$ neighbors which have finalized their colors in previous rounds, thus it cannot select their colors. By the pigeonhole principle, there exists a polynomial 
$P$ of $v$, such that $P$ intersects with at most $\frac{\Delta^2}{q}$ neighboring colors. 
This polynomial is defined over a field of size $q$, and there is a value $t \leq \Delta^2/q + O(\sqrt{\Delta})$, such that $\langle t,P(t) \rangle$ does not intersect with any neighboring polynomial, nor with any neighboring color. Thus, $\langle t,P(t) \rangle$ is selected as the color of $v$, and it is from a range of size $q \cdot (\frac{\Delta^2}{q} + O(\sqrt{\Delta})) + q = \Delta^2 + O(\Delta^{3/2})$. This is the resulting number of colors of step 3, whose running time is $O(\Delta^{3/2} \cdot \log \Delta)$ as described in Section \ref{arbLinial}.

Next we describe step 4, which reduces the number of colors from $\Delta^2 + O(\Delta^{3/2})$ to $\Delta^2 + 1$. This step is performed using an adaptation of a simple color reduction for $G$ to $G^2$. The simple color reduction for $G$ works as follows. Each vertex whose color is greater than all its neighbors colors (and greater than $\Delta(G) + 1$), selects a new color from $\{1,2, ... ,\Delta(G) + 1\}$ that is not used by any neighbor. By starting from a proper $(\Delta(G) + k)$-coloring, for a parameter $k > 1$, and repeating this for $k-1$ rounds, a proper $(\Delta(G) + 1)$-coloring is achieved. If message size is unbounded, this can be directly applied in $G^2$. However, each vertex needs to collect the colors of its $2$-distance neighborhood. This is in order to know which colors are available. Thus, $O(\Delta)$ rounds are required to simulate each original round. To perform this in the CONGEST setting efficiently, we improve this by invoking a binary search to find an available color. This is done as in Section \ref{Linal2Distance} and requires $O(\log \Delta)$ rounds. Consequently, an $O(\Delta^2 + O(\Delta^{3/2}))$-coloring of $G^2$ can be reduced into $(\Delta^2 + 1)$-coloring of $G^2$ within $O(\Delta^{3/2} \log \Delta)$ rounds. This completes the description of step 4. The result of invoking steps $1$ to $4$ is summarized in the next corollary.

\begin{col}
$(\Delta^2 + 1)$-coloring of $G^2$ can be computed within $O(\Delta^{3/2}\log \Delta + \log^* n)$ rounds in the CONGEST setting.
\end{col}

\section{Speedup technique for algorithms on $G^k$ in the CONGEST model}\label{SpeedupCongest}

In this section we devise a method which speeds up the running time of a wide class of algorithms for problems on $G^k$ in the CONGEST model. This includes algorithms for problems such as: MIS, maximal matching, edge coloring, vertex coloring, ruling set, cluster decomposition, etc. 
Our goal is reducing the amount of data passed in the network, by exchanging messages only half-the-way, compared to standard algorithm for $G^k$.
In the general case, in a problem for $k$-distance, a node may have up to $\Delta^k$ neighbours. Thus, previously-known solutions require $\Delta^{k-1}$ rounds for collecting information about $k$-hop neighbours in each step of an algorithm. This way, an algorithm with $f(\Delta,n)$ rounds for $G$ is translated into an algorithm with $f(\Delta^k,n) \cdot \Delta^{k-1}$ rounds for $G^k$. (In general, the running time may be even larger, but we focus on solutions in which each vertex makes a decision based on the current information of vertices in its $k$-hop-neighborhood, where each vertex holds $O(\log n)$ bits.)
On the other hand,  our new technique makes it possible to collect aggregated data from distance only $k/2$. As a result, the size of the collected data becomes $O(\Delta^{k/2})$. This allows us to obtain a running time of $f(\Delta^k,n) \cdot \Delta^{k/2-1}$ instead of $f(\Delta^k,n) \cdot \Delta^{k-1}$. Nowadays, various problems have algorithms with running time $f(\Delta,n) = O(\mbox{polylog}(\Delta) + \log^* n)$ for $G$. In such cases our technique provides a quadratic improvement for $G^k$, i.e., for distance-$k$ computations on $G$. Our technique is based on idempotent functions, described below.


\subsection{Idempotent functions}
An {\em idempotent function} is a function $f$ from a set $A$ to itself, such that for every $x \in A$, it holds that $f(f(x)) = f(x)$. For example, the following boolean functions from $\{0,1\}$ to $\{0,1\}$ are idempotent. 
$F_0(x) = x \mbox{ OR } 0$, 
$F_1(x) = x \mbox{ OR } 1$,
$G_0(x) = x \mbox{ AND } 0$,
$G_1(x) = x \mbox{ AND } 1$.
Another example, are functions $\hat{H}_t,\check{H}_t:N \rightarrow N$, where $t \in N$, defined as follows. 
$\hat{H}_t(x) = max(x,t)$, $\check{H}_t(x) = min(x,t)$.

A pair of functions $f(),g()$ is {\em commutative} if $f(g(x)) = g(f(x))$. A set of functions is a {\em commutative set} if any pair of functions in the set is commutative.\\
We define an {\em idempotent composition} as follows. A set $A$ with a commutative set of functions $f_1,f_2,...,f_k$ from $A$ to itself is an {\em idempotent composition}, if for any $q$  (not necessarily distinct) indices $j_1,j_2,...,j_q$,  in the range $[k]$, and $p \leq q$ distinct indices $i_1 \neq i_2,..., \neq i_p$, such that  $\{i_1,i_2,...,i_p\} =  \{j_1,j_2,...,j_q\}$ , it holds that
$$ f_{j_1}(f_{j_2}(...f_{j_q}(x))...)) = f_{i_1}(f_{i_2}(...f_{i_p}(x))...)). $$

For example, the set $\{0,1\}$ with the functions $F_0$, $F_1$ as defined above is an idempotent composition. 
Indeed, $F_0(F_0(...F_0(x)...)) = F_0(x) = x$, $F_1(F_1(...F_1(x)...)) = F_1(x) = 1$, and any composition of functions $F_0$ and $F_1$ equals $F_0(F_1(x)) = F_1(F_0(x)) = 1$.
\newline

\subsection{High level description of our technique for $G^k$} \label{secpowerg}

This section assumes $k$ is even. In the case $k$ is odd our technique cost another factor of $O(\Delta)$ of communication rounds per algorithm round. Hence it behaves as if the distance required is $k + 1$ in terms for CONGEST communication.
Our method for distance-$k$ computations consists of two stages.

In the first stage each vertex collects information from its $k/2$-hop-neighborhood. This is done by broadcasting in parallel from all vertices to distance $k/2$. Consequently, for any pair of vertices $u,v$ at distance $k$ one from another, there exists a vertex $w$ in the middle of a path between $u$ and $w$ that received the information of $u$ and $v$. Indeed, the distance between $u$ and $w$ is $k/2$, and between $v$ and $w$ it is $k/2$. Next, $w$ computes internally, for each $u$ in the $k/2$-hop-neighborhood of $w$, the available information for $u$ regarding its $k$-hop-neighbors. That is, the neighbors at distance $k$ from $u$ who are also in the $k/2$-neighborhood of $w$, Which, sometimes contains all of the vertices in the $k/2$-neighborhood of $w$ and sometimes only part of in case we would like to avoid double counting.

In the second stage, for all $v \in V$, all information computed for $v$ in its $k/2$-hop-neighborhood should be delivered to $v$. Recall that the information computed by vertices in the $k/2$-hop-neighborhood of $v$ is about the $k$-hop-neighborhood of $v$. But delivering the entire information in a straightforward way to $v$ requires up to $\Delta^k$ rounds. Indeed, vertices at distance $k/2$ from $v$ hold information of size up to $\Delta^{k/2}$, and the number of such vertices is up to $\Delta^{k/2}$. In order to reduce the amount of information that has to be passed, an aggregation function is used. Specifically, each vertex $v$ collects information in a convergecast manner. That is, each vertex at distance $k/2$ from $v$ sends information of size $\Delta^{k/2}$ to its neighbors. This requires $O(\Delta^{k/2})$ rounds. Now vertices at distance $k/2 - 1$ from $v$ have received information from their neighbors at distance $k/2$ from $v$. But instead of sending all this information to $v$ they perform an aggregation in which the information size shrinks to $\Delta^{k/2 - 1}$. In the following phase, the information size shrinks to $\Delta^{k/2 - 2}$,  etc. See Figure 1. 

\begin{figure}[htp]
    \centering
    \includegraphics[width=15cm]{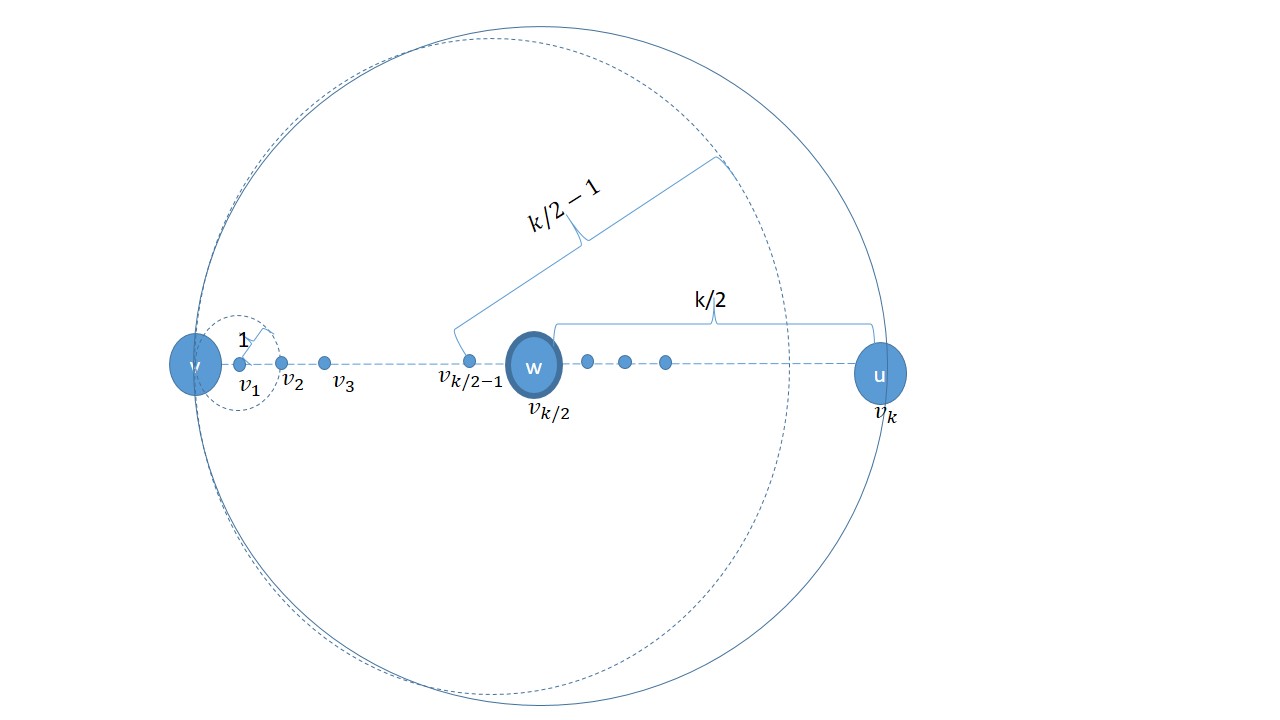}
    \caption{A vertex $w$ in the middle of a path between $v$ and $u$ collects information about its $k/2$-hop-neighborhood. Then a convergecast process is performed, in which balls of radius $k/2-i$, $i = 1,2,...,k/2$, around $v_{k/2 - i}$ are formed within $k/2$ stages. The balls contain aggregated information that after $k/2$ stages is about the $k$-hop-neighborhood of $v$. After stage $k/2$ this information resides in a ball of radius $0$ of $v$, i.e., in $v$ itself. }
    \label{fig:distance_figure}
\end{figure}

A basic building block of ours for the first stage is a procedure for broadcasting to distance $k/2$, initiated by all vertices in parallel. In this procedure, each message contains three elements: Node id of the originator of broadcast, the distance from the originator, and additional information. The part that contains the id and the distance is called {\em header}. In addition, each vertex $v$ maintains a local data structure $D_v$ that initially contains only the element of $v$. In the beginning of the broadcast procedure, each vertex sends a message with the contents of $D_v$ (in the additional information field of the message) to all its neighbors. During the execution of the broadcast procedure, if a node receives several messages with the same id of the originator, the node considers only one of these messages (which was the first to arrive; ties are broken arbitrarily) and ignores the others. In each round of the broadcast, each vertex updates its data structure $D_v$ with additional information it learnt, and sends it to its neighbors. Specifically the update of $D_v$ considers each received message and each element in the message with ID that the vertex has not received yet. The vertex $v$ adds each such element to $D_v$. Then the updated data structure $D_v$ is sent to $v$'s neighbors. The broadcast procedure continues this way until the distance reaches $k/2$. Each message with distance greater than $k/2$ from the origin is discarded. Thus the broadcast process stops after $k/2$ rounds from initiation. 


Once the broadcast stops, for any vertex $w$ at distance at most $k/2$ from $v$, the vertex $w$ computes locally an appropriate value for $v$, using aggregation functions. These functions are applied iteratively for each vertex in the $k/2$-hop-neighborhood of $w$.
Next, a convergecast process is performed for $k/2$ phases. Its goal is that each $v \in V$ learns the outcome of a series of function invocations that are applied to all vertices in the $k$-hop-neighborhood of $v$. (This is done with assistance of vertices at distance up to $k/2$ from $v$.) To this end, aggregation  functions $f: A  \rightarrow A, A \subseteq N_0$ are used. In certain cases $A$ is of size $2$, as in the example of deciding whether the value of $v$ appears in its $k$-hop-neighborhood. But in other cases $|A|$ may be larger. It has to satisfy $\log(|A|) = o(\Delta^{k/2})$, to provide a speed-up in our method. The number of aggregation functions is $|A|$, one per element of $A$. (This is because each aggregation result is an element of $A$, which may affect uniquely the next aggregation, thus an appropriate function for each element is needed.) In each phase $i = 1,2,...,k/2$ of the convergecast process each vertex $v \in V$ receives from its neighbors the results of the aggregation functions applied to each vertex $z$ in the $(k/2 + 2 - i)$-hop-neighborhood of $v$. (Note that these function applications hold results about distance $(k/2 + i)$ from $v$ at that stage.) For each such vertex $z$, there are up to $\Delta$ results of functions application, one per neighbor of $v$. Then for each $z'$ in the $(k/2 + 1 - i)$-hop-neighborhood of $v$, the vertex $v$ applies the aggregation function iteratively, up to $\Delta$ times, according to the number of results for $z'$ that $v$ received from its neighbors. 
This shrinks the amount of information by a factor of $\Delta$ for the next phase, where the radius of function application becomes smaller by $1$, and the radius of gained information becomes larger by $1$. After $k/2$ phases, each vertex $v \in V$ holds a single result of the applications of aggregation functions on all vertices in the $k$-hop-neighborhood of $v$.

A key property of the above process, which constitutes its correctness, is provided in the next lemma.

\begin{lem}
    After each phase $i = 1,2,...,k/2$, for each pair of vertices $v,u$ in $G$ at distance $k$ one from another, there exists a vertex $z$ at distance $k/2 - i$ from $v$ whose result of applications of the aggregation functions until this stage includes an invocation of an aggregation function on $u$. 
\end{lem}
\begin{proof}
    Let $(v = v_0,v_1,v_2,...,v_k = u)$ be a path of length $k$ between $v$ and $u$. The proof is by induction on $i$. \\
    {\bf Base ($i = 1$):} Just before the beginning of the convergecast stage, the vertex $v_{k/2}$ applies the aggregation function on its $k/2$-hop-neighborhood. Thus the result of these applications for $v$ includes a function invocation on $v_k = u$, who is at distance $k/2$ from $v_{k/2}$. This information is sent by $v_{k/2}$ to $v_{k/2 -1}$ in the first phase. Hence, there exists a vertex $z = v_{k/2 -1}$ at distance $k/2 - 1$ from $v$ whose result of applications of the aggregation functions until this stage includes an invocation on $u = v_k$.\\
    {\bf Step:} Suppose that after $i-1$ phases there exists a vertex $z$ at distance $k/2 - (i - 1)$ from $v$ whose result of applications of the aggregation functions until this stage includes an invocation on $u$. Moreover, suppose that $z = v_{k/2 - (i - 1) }$. In the next phase $i$, this result is sent to all neighbors of $z$, including $v_{k/2 - i}$. Hence, there exists a vertex $z = v_{k/2 -i}$ at distance $k/2 - i$ from $v$ whose result of applications of the aggregation functions until this stage includes an invocation on $u = v_k$.
\end{proof}

The above lemma shows that after $i$ phases of the convergecast stage the $(k/2 -i)$-hop-neighborhood of each vertex $v$ contains all needed information for $v$ about its $k$-hop neighborhood. Indeed for any vertex $u$ of distance up to $k$ from $v$ there is a vertex at distance $(k/2 -i)$ from $v$ who holds a result of a series of applications of the aggregation functions, one of which is an invocation for $v$ on $u$. Thus after $k/2$ phases each vertex $v \in V$ holds the result of invocations of the aggregation functions for all vertices in its $k$-hop-neighborhood. The number of rounds of each phase $i = 1,2,...,k/2$ is $O(\log |A| \cdot \Delta^{k/2 + 1 -i})$.

We illustrate this scheme by devising an algorithm for computing $k$-distance $O(\Delta^k)$-coloring. The algorithm generalizes $2$-distance algorithms as follows and based on the additive group coloring (AG) algorithm. The aggregation functions are the OR function, where $A = \{0,1\}$. Specifically: $F_0(x) = x \mbox{ OR } 0$ ; $F_1(x) = x \mbox{ OR } 1$. Each vertex $w \in V$ collects the current colors of its $k/2$-hop-neighborhood. For each vertex $v$ in the $k/2$-hop-neighborhood of $w$, the vertex $w$ computes whether the $b$ (We indicate a vertex color by the notation $\langle a , b \rangle$ when we compute additive group coloring) element of the color of $v$ equals to at least one of the $b$ elements of the other vertices in the $k/2$-hop-neighborhood of $w$. To this end, for each vertex $u$ in its $k/2$-hop-neighborhood, $w$ applies a function $F_i(x)$, where $i = 1$ if the $b$ values of $v$ equals to that of $u$, and $i = 0$ otherwise. Each invocation of  $F_i(x)$ by $w$, except the first invocation of a phase, is applied on the result of another invocation by $w$ in that phase. If this is the first invocation, then it is applied on $x = 0$. 
Once the vertices apply the function for their $k/2$-hop-neighbors, the resulting $\Delta^{k/2}$ values are sent to immediate neighbors. Consequently, each vertex $v \in V$ receives information of the $k/2$-hop-neighborhoods of its own neighbors. Note that vertices at distance up to $k/2-1$ from $v$ received information about vertices at distance up to $k$ from $v$. Now the same is repeated, but for distance $k/2 - 1$, rather than $k/2$. This continues for $k/2$ phases.  Finally, as a result of the function invocations and exchange of messages, each vertex $v \in V$ holds a single result, indicating whether there is another vertex in its $k$-hop-neighborhood with the same $b$ value. This allows to compute the next color of $v$, as in the AG algorithm. Thus, within $O(\Delta(G^k)) = O(\Delta^k)$ iteration of computing the next colors as described above, a proper distance-$k$ coloring with $O(\Delta^k)$ colors is obtained, from an initial $O(\Delta^{2k})$-coloring to distance $k$. The latter coloring can be obtained by generalizing $2$-distance-Linial's algorithm (Algorithm 1) to distance $k$. (We elaborate on this in Section \ref{nonidmptntf}.) Each iteration requires $O(\Delta^{k/2})$ rounds, and the overall running time is $O(\Delta^{k + k/2} + k \cdot \Delta^{k/2} \cdot \log \Delta \cdot \log^* n)$. We summarize this discussion with the next theorem.

\begin{thm}
Distance-$k$ coloring with $O(\Delta^k)$ colors can be computed within  $O(\Delta^{k + k/2} + k \cdot \Delta^{k/2} \cdot \log \Delta \cdot \log^* n)$ rounds in the CONGEST model.
\end{thm}

\noindent In Section \ref{nonidmptntf} we show how the running time for distance-$k$ coloring can be improved. 

We conclude the current section by improving the running time of simulating a round of an algorithm for $G$ in $G^k$. The improvement is from $O(\Delta^{k/2})$ to $O(\Delta^{k/2 - 1})$.
Consider the last phase of broadcast, when each vertex needs to receive information about  $(k/2 - 1)$-neighborhoods from each of its immediate neighbors. A $(k/2 - 1)$-neighborhood contains up to $\Delta^{k/2-1}$ vertices, each of which holds $O(\log n)$ bits of data for transmission. Thus, $O(\Delta^{k/2-1})$ rounds are needed in the last phase of broadcast to deliver the required information to each vertex, which then employs it to compute locally the information regarding its ${k/2}$-hop-neighborhood. Then, in the first phase of convergecast, instead of sending the computation of the entire $k/2$-hop-neighborhood, each neighbor receives information only about its $(k/2-1)$-hop-neighborhood. Indeed, in the second phase of the convergecast $(k/2-1)$-hop-neighborhoods are considered, so this information from the first phase is sufficient. This requires $O(\lceil \frac{\log |A|}{\log n}\rceil \cdot \Delta^{k/2 -1})$ rounds. The other phases require a smaller number of rounds. Thus the overall running time becomes $O(\lceil\frac{\log |A|}{\log n}\rceil \cdot \Delta^{k/2 -1})$. We summarize this in the next Corollary.

\begin{col}
    An $R$ round algorithm for $G$ in the CONGEST model that employs a set of commutative idempotent functions $f_1,f_2,...,f_t:A \rightarrow A$ can be transformed into an algorithm for $G^k$ in the CONGEST model, with running time $O(R \cdot \lceil \frac{\log |A|}{\log n} \rceil \cdot \Delta^{k/2 - 1}).$
\end{col}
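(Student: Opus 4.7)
The plan is to invoke the broadcast--convergecast scheme established in Section~\ref{secpowerg} and to shave one factor of $\Delta$ off of each simulated round by stopping the broadcast stage one hop short and starting the convergecast stage one hop short. Correctness is inherited from the lemma proved just before the corollary; the new content is (a) a careful round count for the truncated stages, and (b) an observation about how aggregated values of $O(\log|A|)$ bits fit into CONGEST messages.

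For each round of the base algorithm on $G$, I would execute the broadcast only to depth $k/2-1$ rather than $k/2$. Phase-by-phase the data sizes in the original scheme are $\Delta, \Delta^2,\ldots,\Delta^{k/2}$, so stopping one phase earlier means the largest relayed message is a $(k/2-1)$-hop-neighborhood description of $O(\Delta^{k/2-1})$ entries, which pipelines across each edge in $O(\Delta^{k/2-1})$ CONGEST rounds. After this truncated broadcast, any vertex $w$ whose immediate neighbor is $v$ holds the entire $(k/2-1)$-hop-neighborhood of $v$, so $w$ can combine these lists (one per immediate neighbor) with its own local information to synthesize the full $k/2$-hop-neighborhood data around itself, without any additional communication.

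Symmetrically, at the start of the convergecast, instead of $w$ shipping the result of its aggregation functions over its entire $k/2$-hop-ball (size $\Delta^{k/2}$), it ships only the $\Delta^{k/2-1}$-sized table indexed by its $(k/2-1)$-hop-ball; this is precisely the input required by the second convergecast phase, whose effective radius is already $k/2-1$. Subsequent phases shrink the table by a factor of $\Delta$ each time, so the bottleneck is the first phase with $O(\Delta^{k/2-1})$ aggregated values per edge. Each value is $O(\log|A|)$ bits, so a CONGEST message of $O(\log n)$ bits carries $\lfloor \log n/\log|A|\rfloor$ values at once, yielding a packing factor of $\lceil \log|A|/\log n\rceil$. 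The per-round cost is therefore $O(\lceil \log|A|/\log n\rceil \cdot \Delta^{k/2-1})$; multiplying by $R$ gives the claimed bound.

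The main obstacle I anticipate is double-counting in the local-synthesis step: several midpoints $w,w'$ lying on different $k$-paths between the same pair $(u,v)$ may each inject the contribution of $u$ into the tally destined for $v$, and the order in which these midpoints are visited by the convergecast is not prescribed. This is exactly the use-case that motivates the definition of an idempotent commutative composition introduced earlier in the section: repeated and reordered invocations of any $f_i$ on the same argument collapse to a single invocation, so the result delivered to $v$ is well-defined and agrees with what a direct one-round simulation in $G^k$ would have produced. Together with the midpoint existence lemma proved just before the corollary, this establishes correctness, and the round count above establishes the running time.
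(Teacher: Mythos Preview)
Your proposal is correct and mirrors the paper's own argument essentially line for line: truncate the broadcast so that the heaviest transmitted payload is a $(k/2-1)$-hop-neighborhood of size $O(\Delta^{k/2-1})$ and reconstruct the $k/2$-ball locally, then in the first convergecast phase send only the $\Delta^{k/2-1}$-entry table indexed by the $(k/2-1)$-ball, with the $\lceil \log|A|/\log n\rceil$ packing factor accounting for the bit width of aggregated values. Your explicit treatment of the double-counting obstacle via the commutative idempotent property is exactly how the paper justifies correctness of the scheme in the surrounding discussion.
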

We also obtain an improvement in the memory complexity, as shown in the next corollary.
\begin{col}\label{MemoryFootprint}
    Denote by $M$ the local memory of a vertex needed to complete an algorithm in  a straightforward way for $G^k$, $k \geq 2$, in the CONGEST model. The local memory required for an algorithm which uses our aggregation functions, such that $\log |A| = O(\log n)$, is bounded by $O(\frac{M}{\Delta^{k/2}}) \sim O(\sqrt{M})$. 
\end{col}
\begin{proof}
    Assume that an algorithm has aggregation functions with image $A$. Thus, the memory needed to store the required data resulting from functions applications is $\log |A|$ bits. This means that each vertex needs to hold at most $\log |A|$ bits times the number of vertices that are in distance $\frac{k}{2}$ from it. This is bounded by $O(\log |A| \cdot \Delta^{\frac{k}{2}})$. 
    The number of bits stored when using our technique instead of the naive solution is $O(\log |A| \cdot \Delta^{\frac{k}{2}})$ as opposed to $M = O(\log n \cdot \Delta^{k})$. (This is because in the naive solution every vertex must know its $k$-hop-neighborhood, which might contain $O(\Delta^k)$ vertices, and each of the vertices information can be stored using at $O(\log n)$ bits.) Thus, the local memory required for an algorithm that uses our aggregation functions is $O(\log |A| \cdot \Delta^{\frac{k}{2}}) = O(\frac{M}{\Delta^{k/2}}) \sim O(\sqrt{M})$.
\end{proof}
Memory considerations as in Corollary \ref{MemoryFootprint} are usually not analyzed in the CONGEST model. However, in real life applications that consider RAM limitations, our technique provides an improvement, which can make algorithm implementation much more efficient.

\subsection{Computations with non-idempotent functions} \label{nonidmptntf}
A function that is broadly used by us, and is not idempotent is the counting function. Specifically, given a value that a vertex holds, it needs to compute how many times this value appears in the $k$-hop-neighborhood of the vertex. It is applied on each vertex in the $k$-hop-neighborhood, and increments the result by one, each time it encounters the given value. (It does not change the result when it is applied on a different value.) Hence, the number of function applications may affect the result, and thus it is non-idempotent. Nevertheless, we are still interested in using such a function, since it appears in Linial's algorithm for computing number of polynomial intersections, and in arbdefective-coloring for computing the number of vertices of the $k$-hop-neighborhood with the same color.

There are two options to address this goal. The first option is analysing the solution with the possibility of over-counting. In this case the result is a correct upper bound on the desired solution, and the goal is analyzing how far it is from a solution that counts exactly. The second option is to obtain exact counting, using a more sophisticated construction, which we describe in the sequel. 

In the first option we perform an adaption from $G$ to $G^k$ as in the case of idempotent functions, described in Section {\ref{secpowerg}}. Now we analyze the maximum number of function invocations in an iteration. Note that each invocation corresponds to a unique path of length at most $k$ that starts from a vertex $v$, goes through a vertex $w$ in the middle of the path, and ends in a vertex in the $k/2$-hop-neighborhood of $w$. The number of such paths is bounded by $\Delta^k + \Delta^{k-1} +\Delta^{k-2}+....\Delta  = O(\Delta^k)$. Denote this bound by $d_p$. Now we can use $d_p$ instead of $\Delta^k$ in the computations for $k$-distance-Linial algorithm or arbdefective colorings to distance $k$. (Note that $\Delta$ is known to all vertices, so they can compute $d_p = \Delta^k + \Delta^{k-1} +\Delta^{k-2}+....\Delta$.) These computations will provide correct results, according to the same pigeonhole principle, as in the proof of the distance-$2$ variants, but the results now depend on $d_p$, rather than $\Delta^k$.

We illustrate this with an adaptation of Linial's algorithm to provide distance-$k$ coloring with $O(\Delta^{2k})$ colors, for a constant $k$. Let $d_p = O(\Delta^k)$, as described above. We compute a coloring with $O(d_p^2) = O(\Delta^{2k})$ colors. This is done by $O(\log^* n)$ stages, each of which perform a binary search for $O(\log \Delta^k) = O(\log \Delta)$ phases. The binary search is done using the aggregation counting function, to count the number of conflicts in distance $k$ of a polynomial $P$, for each half of its range. Even though this may cause over-counting, the aggregation functions are applied at most $d_p$ times per polynomial per phase. Thus using a field of size $q$, $d_p < q < 2d_p$, guarantees that there is going to be $t \in \{0,1,...,q - 1\}$, such that $P(t)$ does not intersect with any polynomial in its $k$-hop-neighborhood. Such value $t$ is going to be found by the binary search. This results in an $O(d_t^2)$-coloring in $O(\Delta^{k/2 -1} \cdot \log d_t \cdot \log^* n)$ time, i.e., $O(\Delta^{2k})$-coloring in $O(\Delta^{k/2 -1} \cdot \log \Delta \cdot \log^* n)$ time. We summarize this in the next corollary.

\begin{col}
    For a constant $k \geq 1$, we compute $O(\Delta^{2k})$-coloring of $G^k$ in $O(\Delta^{k/2 - 1} \cdot\log \Delta \cdot \log^* n)$ rounds in the CONGEST model.
\end{col}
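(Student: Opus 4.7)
The plan is to adapt the distance-$2$ variant of Linial's algorithm from Section \ref{Linal2Distance} to arbitrary constant $k$ by running it on top of the broadcast/convergecast engine of Section \ref{secpowerg}, with the counting operation as the non-idempotent aggregation function analysed in Section \ref{nonidmptntf}. Concretely, each vertex $v$ holds a color in a palette $[1,p]$ (initially $p = O(\Delta^{2k}\log n)$, obtained from the ID space by applying the same procedure $O(\log^* n)$ times), which it identifies with a polynomial $P_v$ over a finite field $\mathbb{F}_q$ of prime order $q$ chosen below. In each Linial stage, $v$ searches for some value $t \in \{0,1,\ldots,q-1\}$ such that $(t,P_v(t))$ is not equal to $(t,P_u(t))$ for any $u$ in its $k$-hop-neighborhood, and then adopts $(t,P_v(t))$ as its new color.

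The palette size is dictated by the over-counting analysis. Using counting as the aggregator inside the $k/2$-distance broadcast/convergecast scheme, a single true collision between $P_v$ and a polynomial of some vertex $u$ at distance exactly $r \leq k$ may be registered once for every ``middle vertex'' $w$ on a length-$r$ path from $v$ through which the counter is propagated. The number of such paths is bounded by $d_p = \Delta + \Delta^2 + \cdots + \Delta^k = O(\Delta^k)$. I therefore choose a prime $q$ with $d_p < q < 2d_p$, so that $q = \Theta(\Delta^k)$ and $q^2 = O(\Delta^{2k})$. Since two distinct degree-$(k-1)$ polynomials intersect in at most $k-1$ points, the aggregated (possibly inflated) count of conflicts for any fixed $t$ is still strictly less than $q$, and a standard pigeonhole argument, identical in form to that in Lemma \ref{linialEnoughColors}, guarantees the existence of a value $t$ whose aggregated count is $0$, which is necessarily a genuinely conflict-free value. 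Thus the binary search always succeeds.

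For the running time, I would execute the binary search exactly as in Algorithm 4, but replace each round of neighbor-to-neighbor exchange by one invocation of the speedup machinery of Section \ref{secpowerg}, instantiated with the counting aggregator whose codomain fits in $O(\log n)$ bits. By the corollary at the end of Section \ref{secpowerg}, each such round costs $O(\Delta^{k/2-1})$ rounds in $G^k$. A single Linial stage performs $O(\log q) = O(\log \Delta)$ halving steps, and $O(\log^* n)$ stages suffice to bring the palette from $O(\Delta^{2k}\log n)$ down to $O(\Delta^{2k})$, giving a total of $O(\Delta^{k/2-1}\cdot \log \Delta \cdot \log^* n)$ rounds, as claimed.

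The main obstacle is the non-idempotence: I must be careful that the inflation factor $d_p$ is genuinely bounded by $O(\Delta^k)$ and does not interact badly with the successive halvings of the binary search (each halving halves the range, but the aggregated counts of the two halves should still sum, up to a factor of at most $d_p$, to at most the true number of conflicts, so the smaller half contains fewer than half of the conflicts in the inflated sense too). Once this monotonicity/bounded-inflation property is verified, the rest is routine composition of the distance-$2$ correctness proof with the distance-$k$ transport mechanism, syntactically replacing ``2-hop'' by ``$k$-hop'' and the $\Delta$-factor transmission cost by $\Delta^{k/2-1}$.
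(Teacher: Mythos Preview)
Your proposal is correct and follows essentially the same approach as the paper: adapt the distance-$2$ binary-search variant of Linial's algorithm to $G^k$ via the broadcast/convergecast machinery of Section~\ref{secpowerg}, use the non-idempotent counting aggregator, absorb the over-counting by working with the path-count bound $d_p = O(\Delta^k)$ and a prime $q$ with $d_p < q < 2d_p$, and tally $O(\log^* n)$ stages of $O(\log \Delta)$ binary-search phases at $O(\Delta^{k/2-1})$ rounds each. The only slip is the phrase ``degree-$(k-1)$ polynomials'' and the claim about the inflated count ``for any fixed $t$''; the relevant bound is on the \emph{total} inflated count over all $t$, and the polynomial degree is whatever Linial's construction dictates for the current palette size, but this does not affect the argument.
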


While the above option is a simple solution, it provides an upper bound, which may be several times larger that the exact value. Also, for other functions, applying them more than the required number of times may not work. Thus we propose a second option that allows executing the function exactly the desired number of times. This requires a one-time preprocessing stage with $O(\Delta^{k - 1})$ rounds. Still, for algorithms with running time $R$ in $G$ of at least $\Delta^{1/2}$, this does not affect significantly the overall running time of the transformation for $G^k$. (Recall that the best currently-known deterministic algorithm for $O(\Delta)$-coloring and MIS have running times $\tilde{O}(\Delta^{1/2} + \log^* n)$ and $O(\Delta + \log^* n)$, respectively.)

In the preprocessing stage, each vertex $v \in V$ computes a BFS tree of height $k/2$ rooted at $v$. (Thus, the radius of the tree is also bounded by $k/2$.) This is the BFS tree of the $k/2$-hop-neighborhood of $v$. Then each vertex receives the BFS trees of its $k/2$-hop-neighborhood. Hence, information about $k$-hop-neighborhoods is obtained. The construction of BFS trees proceeds in $k/2$ phases. In each phase $i = 1,2,...,k/2$, BFS trees of height $i$ are constructed from BFS trees of height $i - 1$, as follows. A vertex receives the BFS trees of height $i - 1$ from its neighbors. It constructs locally a graph $G_v$, consisting of all vertices of all these trees, and  all edges of the trees. (Note that such a composition of trees may cause cycles.) Now a local BFS algorithm is executed on $G_v$, starting from $v$. Note that in the graph $G_v$ all edges containing $v$ connect it to roots of trees of height $i - 1$. Consequently, the BFS of $G_v$ results in a tree of height $i$. It is broadcasted to the neighbors of $v$ in phase $i$. Since a BFS tree of height $i$ may have up to $O(\Delta^i)$ vertices and edges, the running time of phase $i$ is $O(\Delta^i)$. The overall running time for computing BFS trees of height $k/2$ is $O(\Delta^{k/2 - 1})$.  This is because in the beginning of phase $k/2$, BFS trees of height $k/2 - 1$ are received, which requires $O(\Delta^{k/2-1})$ rounds. Then BFS trees of height $k$ are constructed locally, which completes the computation of the BFS tree.

Next, for each vertex $v \in V$ in parallel, the BFS tree that $v$ computed is sent to all vertices in the $k/2$-hop-neighborhood of $v$. This is done by parallel broadcasting within $k/2$ phases. Phase $1$ requires $O(\Delta^{k/2})$ rounds, phase $2$ requires $O(\Delta^{k/2 + 1})$ rounds, etc. Phase $k/2$ requires $O(\Delta^{k - 1})$ rounds, which is also the overall running time of the preprocessing stage.   

Next, we explain how these BFS trees are used to make sure that each function invocation in the convergecast stage of our algorithm is performed exactly once, for each vertex in the $k$-hop-neighborhood of $v$. Let $w \in V$ be a vertex that performs the convergast stage, and $v,u$ be vertices in the $k/2$-hop-neighborhood of $w$. The vertex $w$ has to decide whether to invoke a function of $v$ on $u$. Since $w$ holds the BFS tree of height $k/2$ of $v$, the vertex $w$ knows who are all the other vertices $w'$ that also considering at that stage whether to execute a function for $v$. Moreover, $w$ knows the set $W' = \{w' \ | \ w' \mbox{ is a vertex at distance } k/2 \mbox{ from } $v$ \\ \mbox{, and at distance at most } k/2 \mbox{ from } u \}$. This is because $w$ holds all BFS trees of its $k/2$-hop-neighborhood. Now, $w$ can find the vertex with the smallest ID in $W'$. If this is $w$ itself it applies the function, and does not apply it otherwise. This way, out of all vertices $w'$ that could have applied it, exactly one vertex does so.

It remains to address vertices $u$ at distance less than $k/2$ from $v$, for which there are no vertex $w$ at distance $k/2$ from $v$, such that $u$ is in the $k/2$-hop-neighborhood of $w$. This is addressed directly by $v$, using the BFS tree of $v$. 
Since the BFS tree of $w$ is also known to $v$, it is possible for vertex $v$ to decide to apply the function on such $u$ that are not in the BFS tree of $w$.

Now, our scheme that described in Section \ref{secpowerg} for computations on $G^k$ with idempotent functions can be extended to non-idempotent functions. The difference in the generalized scheme is that it starts with the preprocessing stage of computing BFS-trees of height $k/2$. In addition, during the main stages of the scheme, whenever a function has to be applied, a check is performed to ensure that it is applied exactly once for a pair $v,u$, as explained above. This gives rise to the following result.

\begin{thm} \label{distancealgfast}
    Suppose we are given a set $A$ of all possible inputs and outputs of aggregation functions, and an $R$-round algorithm for $G$ in the CONGEST setting, where each round consists of a series of invocations of commutative aggregation functions $f_1,f_2,...f_t:A \rightarrow A$. Then we can obtain an algorithm for the same problem on $G^k$ whose running time in the CONGEST model is $O(R \cdot \lceil \frac{\log |A|} {\log n} \rceil \cdot \Delta^{k/2 -1} + \Delta^{k-1})$.
\end{thm}


\subsection{Applications of the speedup technique for $G^k$}\label{speedup}
In this section we provide several application of Theorem \ref{distancealgfast}. Consider the $O(\Delta)$-coloring algorithm for $G$ on which the results of Section \ref{secfastcol} are built. It consists of three parts:\\
\noindent (1) Computing defective coloring. This part employs an aggregation function to compute number of intersections in Linial's algorithm.\\
\noindent (2) Computing arbdefective coloring. This part employs an aggregation function to compute the number of neighbors with the same color.\\
\noindent (3) Iterating over color classes and computing a proper coloring iteratively. This part employs an aggregation function that again computes the number of polynomial intersections, but this time according to the algorithm in Section \ref{arbLinial}.

In all these functions we can set $A = \{1,2,...,n\}$, since the results are always bounded by the number of vertices in the graph.
Since the algorithm for $G$ composed of these steps requires $\tilde{O}(\Delta^{1/2} + \log^* n)$ rounds \cite{BEG18}, it gives rise to an $\tilde{O}(k \cdot\Delta^{k-1} + k \cdot \Delta^{\frac{k}{2}} \log^* n)$-round algorithm for $G^k$, by Theorem \ref{distancealgfast}. (The factor of $k$ appears in the running time because the $O(\log \Delta)$ running time for binary searches in $G$ translates into $O(\log \Delta^k) = k \log \Delta$ phases in $G^k.)$ This is summarized in the next corollary.
\begin{col} \label{coldistalgc}
    For $k > 1$, an $O(\Delta^k)$-coloring of $G^k$ can be computed in $\tilde{O}(k \cdot\Delta^{k-1} + k \cdot \Delta^{\frac{k}{2}} \log^* n)$ rounds in the CONGEST model.
\end{col}

Next, we obtain an MIS algorithm for $G^k$. It starts with computing $O(\Delta^k)$-coloring of $G^k$. Then, for each color class $i = 1,2,...$, an iteration is performed, consisting of $k$ rounds. Specifically, all vertices of color $i$ perform broadcast to distance $k$. That is, each such vertex initializes a broadcast message with a counter that is initialized to $0$. Each time such a message is received, the counter is incremented by $1$. If a vertex receives several messages in parallel, it handles only one of them, and the others are discarded. If the counter of the message is smaller than $k$ it is sent to the immediate neighbors. Otherwise, it is discarded. Therefore, an iteration completes within $k$ rounds. Then vertices of color $i$ that initiated the broadcast decide to join the MIS, and vertices that received the messages decide not to join. Vertices that made decisions become inactive. Then the next iteration starts on the remaining active vertices, etc.  After $O(\Delta^k)$ such iterations, the algorithm terminates. 

\begin{thm}
    MIS of $G^k$ can be computed within $\tilde{O}(k \cdot \Delta^k + k \cdot \Delta^{k-1} \cdot \log^* n)$ rounds in CONGEST model.
\end{thm}
\begin{proof}
An $O(\Delta^k)$-coloring of $G^k$ can be computed in $\tilde{O}(k \Delta^{k-1} \log^* n)$ rounds, by Corollary \ref{coldistalgc}. Next, if a pair of vertices of the same color join the set, they are at distance at least $k+1$ one from another in $G$. For a pair of vertices $u,v$ of distinct colors that join the set, one of them joins before the other. Assume without loss of generality that this is $u$. After $u$ joins, all vertices at distance at most $k$ from $u$ decide not to join, as a result of the broadcasting of $u$. The vertices become inactive, but $v$ that joins the set in a later stage is active at that time. Hence $u$ and $v$ are at distance at least $k+1$ one from another. Consider a vertex $v'$ that did not join the set. Let $j$ be the color of $v'$. Then in iteration $j$ the vertex $v'$ was inactive. This means that in an earlier iteration it received a message of broadcast with a counter value between $1$ and $k$. Thus $v'$ is at distance at most $k$ from some vertex in the independent set. Thus the result of the algorithm invocation is an MIS of $G^k$. The number of iteration is $O(\Delta^k)$, each requires $k$ rounds, which results in overall running time   $\tilde{O}(k \cdot \Delta^k + \Delta^{k-1} \log^* n)$.
\end{proof}

\noindent {\large \bf Conclusion \\}
Our results demonstrate the usefulness of the meet-in-the-middle approach in distributed graph algorithms for the CONGEST setting. Our techniques give rise to a quadratic improvement for simulating a round of certain algorithms for $G$ on $G^k$. These are algorithms that can be expressed using commutative and idempotent operations. In the case $k = 2$, the techniques provide an exponential improvement in some cases. The improvement is most significant when the original algorithm has very low number of rounds on $G$, but the standard extension to $G^2$ incurs a penalty of $\Delta$ rounds. In such cases, our approach may provide an improved algorithm with just $\mbox{polylog}$ dependency on $\Delta$, if the original algorithm has certain properties. The entire transformation of an algorithm from $G$ to $G^k$ also depends on the number of rounds of the algorithm for $G$. Therefore, the best results are obtained when this number is small. Thus an improvement of an algorithm for $G$ would give an improvement to the extension for $G^k$. In certain algorithms a BFS tree has to be constructed for the extension to $G^k$. This affects the number of rounds as well. It would be interesting to find a solution that does not employs BFS, and as a consequence improves running time. It would be also interesting to discover simpler implementations of the techniques. Recently, Maus \cite{M21} obtained simplified algorithm for various coloring problems. Some of these problems with their (not necessarily simplified) solutions are used as building blocks in our paper. It is an open question whether our entire construction can be simplified too. \\

\noindent {\large \bf Acknowledgments \\}
The authors are grateful to Michael Elkin for fruitful discussions.
\bibliography{bib2doi}

\appendix
\noindent {\large \bf APPENDIX}
\section{$O(\Delta^2)$-Coloring of $G^2$ in time $O(\Delta^2 + \log^* n$)}\label{AG2D}

In this section we devise a distributed algorithm whose input is an $O(\Delta^4)$-coloring of $G^2$, and it computes an $O(\Delta^2)$-coloring of $G^2$. This is equivalent to 2-distance $O(\Delta^2)$-coloring of $G$.  The algorithm in the current section employs  1-bit messages.
The input for our algorithm is a graph $G$ with a given proper 2-distance $O(\Delta^4)$-coloring $\phi$. (We explain how to compute such an $O(\Delta^4)$-coloring in Section \ref{Linal2Distance}.) 
Let $q = O(\Delta^2)$ be the smallest prime number, such that the number of colors in the coloring is bounded by $q^2$, and $q > 2\Delta^2$.
Denote $\phi(v) = \langle a , b \rangle$, such that $a,b \in O(\Delta^2)$. The coloring remains proper during all stages of our algorithm and eventually the colors of all nodes obtain the form of $\phi(v) = \langle 0 , b \rangle$. This results in an $O(\Delta^2)$-coloring of the entire graph. The algorithm proceeds in phases as follows. In each phase we consider the value $b$ of each vertex color and its 2-distance neighbors. If a vertex has a value of $b$ that is distinct from the $b$ values of all its 2-hop neighbors it will change its color to $\phi(v) = \langle 0 , b \rangle$. Otherwise, the $b$ value of $v$ equals to at least one $b$ value of a vertex in the 2-hop-neighborhood of $v$. In this case, we say there is a {\em conflict}. Then, the vertex $v$ will update $\phi(v) = \langle a , b + a \mbox{ mod } q \rangle$. This completes the description of a phase.

The described phase is a generalization of AG (Additive group) coloring \cite{BEG18} for supporting 2-distance coloring. However, instead of a straightforward generalization that invokes the algorithm of \cite{BEG18} on $G^2$, we devise a bit-efficient procedure. The main idea is that in order to support 2-distance proper coloring it is not necessary for each node to know which colors its neighbours chose. The only required information for a vertex is whether there is a conflict with its $b$-coordinate in its 2-hop-neighborhood. In order to know this information each vertex $v \in V$ sends an indication to its 1-hop neighbors about the color it intends to choose. If this color is not targeted by any vertex in its 2-hop neighborhood, all 1-hop neighbors of $v$ can check that there are no conflicts and notify $v$ accordingly. If there are no conflicts then $v$ finalizes the intended color. Otherwise, at least one 1-hop-neighbor of $v$ discovers a conflict. It notifies $v$, which discards the proposed color and proposes a new one in the next phase, as explained above. We devise a bit-efficient algorithm, where we use each node's neighbors to aggregate information about conflicts in order to decide the color of the next phase. In a bandwidth restricted model (i.e. CONGEST or one-bit) the running time of our 2-distance-AG-algorithm is faster by a factor of $\Delta$ compared to simply running AG coloring \cite{BEG18} on $G^2$. Specifically, when running the original AG algorithm on $G^2$ in a bandwidth restricted model \cite{BEG18}, each node has to be aware of the $O(\Delta)$ colors passed by its 2-distance neighbors. In order to send all $\Delta$ colors, $\Delta$ rounds are required for the message passing of each step in the original AG algorithm. This means that the running time of AG as a black-box on $G^2$ is $O(\Delta^3)$ rounds. This is because running AG on $G^2$ requires $O(\Delta^2)$ phases, each of which may have to transmit $O(\Delta)$ colors over each edge. Therefore, the running time is $O(\Delta^2) \cdot \Delta = O(\Delta^3)$. On the other hand, our new algorithm improves the running time to $O(\Delta^2)$ rounds. \\

\noindent {\bf A.1 \ \ High level description \\}
The algorithm is a generalization of the 1-bit AG algorithm described in \cite{BEG18}. But instead of sending a message to all of the 2-distance neighbors, which is similar to a round of the original AG algorithm for $G^2$, each node only needs to send the decision to its immediate neighbors, 
which remember the decisions and aggregate them. 
The immediate neighbors of a vertex $v \in V$ receive decisions from their own neighbors, and thus hold all decisions of the 2-hop-neighborhood of $v$. Each immediate neighbor of $v$ checks for conflicts, and notifies $v$ whether at least one conflict occurred. If $v$ receives such a notification from at least one of its neighbors, it deduces that there is a conflict in its 2-hop-neighborhood. Otherwise, it has no conflicts in its $2$-hop-neighborhood.
This way, based on the received notification, a node can choose a proper color without knowing which colors its 2-distance neighbors chose. See pseudocode below. 

\noindent {\bf A.2 \ \ Pseudocode\\}
(The execution starts from Algorithm 6, which invokes Algorithms 4 - 5 as sub-routines.) \\
\begin{algorithm}
	\begin{algorithmic}[1]

    \STATE /* Check whether the color $<a_0, b_0>$ conflicts with at least one color from $\langle a_1, b_1 \rangle$, .. ,$\langle a_d, b_d \rangle$ */
    
	\IF {$\forall i \in {1,2,..d}$ it holds that $b_0 \neq b_i$}
	    \STATE return false
	\ELSE
	    \STATE return true
	\ENDIF
    \end{algorithmic}
    \caption{Procedure conflict($\langle a_0, b_0 \rangle$, { $\langle a_1, b_1 \rangle$) .. $\langle a_d, b_d \rangle$ ) }}
\end{algorithm}

\begin{algorithm}
\begin{algorithmic}[1]

\STATE /* Compute the next color $\phi_{i + 1}(v)$, based on the current color $\phi_i(v) = <a,b>$ and a set of boolean variables $R_1,....,R_d$, that represent conflicts. \\
Return true if the next color has an update in the $b$-value, otherwise return false */

\IF {all of $R_1,R_2,...,R_d$ equal false}
    \STATE $\phi_{i+1}(v) = \langle 0, b \rangle$
    
    \STATE return false
\ELSE
	\STATE $\phi_{i+1}(v) = \langle a, a + b \mbox{ mod } q\rangle$
	
	\STATE return true
\ENDIF

\caption{Procedure next-color($\phi_{i}(v)= \langle a, b \rangle, \{R_1,...,R_d\}$)}
\end{algorithmic}
\end{algorithm}
\begin{algorithm}
	\begin{algorithmic}[1]
		
		\STATE /* Initially we are given a graph with proper $k$ coloring $\phi$. Let $q$ denote the smallest prime number such that $\sqrt{k} \leq q $ and $2\Delta^2 < q$ */

        \STATE Let $d$ denote the number of neighbors of $v$

		\STATE Denote $\phi(v) = \langle a , b \rangle$,  where $a, b \leq q$
		
		\STATE $C_{\Gamma} = \{ \phi(u) \ | \ u \in \Gamma(v) \}$
		
		\WHILE {$\phi(v) \neq \langle 0, b \rangle$}
		
		\STATE $C_{\Gamma} = C_{\Gamma} \cup \phi(v)$

		\FORALL{$u_i \in \Gamma(v)$ in parallel}
		    
		    \STATE $R_i$ = conflict$(\phi(u_i), C_{\Gamma} \setminus \phi(u_i)) $  \ \ \ \ \ /* Invocation of Algorithm 4 */
		    
		    \STATE send $R_i$ to $u_i$
		
		\ENDFOR
		
		\STATE Receive a message from each of the neighbors  $u_1,u_2,...,u_d$, of $v$, and store them in $R'_1,R'_2,....,R'_d$, respectively.

        \STATE $S_v$ = next-color ($\phi(v), \{R'_1,R'_2,....,R'_d$\}) \ \ \ \ \ \  /* Invocation of Algorithm 5 */
		
		\STATE send $S_v$ to all neighbors
		
		\STATE Receive all neighbor messages $S_u$ from line 13 /* $1$ bit each */
		
		\STATE $\forall u \in \Gamma(v)$ compute $\phi_{i+1}(u)$ using next-color$(\phi(u),\{S_u\})$. \ \ \ \ \ \ \ /* Invocation of Algorithm 5 */

		\ENDWHILE
	\end{algorithmic}
	\caption{2-Distance-AG}
\end{algorithm}
\clearpage
Next, we analyze the correctness of the algorithm. Lines 6 - 15 are called a {\em phase} of Algorithm 6.
\begin{lem}\label{aground}
	Suppose we are given a graph $G=(V, E)$ with a proper distance-$2$ coloring $\varphi$ using $O(\Delta^4)$ colors. After each phase of Algorithm 2-Distance-AG the coloring remains a proper distance-$2$ coloring. 
\end{lem}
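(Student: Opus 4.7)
My plan is to proceed by induction on the number of phases, using as inductive hypothesis that $\phi$ is a proper distance-$2$ coloring before the phase (the base case being the given input). Fix an arbitrary pair of distinct vertices $u,v$ with $\mathrm{dist}_G(u,v)\le 2$, write $\phi(u)=\langle a_u,b_u\rangle$ and $\phi(v)=\langle a_v,b_v\rangle$, and let $\phi'$ denote the coloring after the phase. Two preliminary observations simplify matters. First, the while-loop guard on line~5 freezes every vertex whose first coordinate is $0$, so only vertices with $a\neq 0$ may update. Second, I would verify that the disjunction $\bigvee_j R'_j$ received by $v$ on line~11 is true iff some vertex in the $2$-hop ball of $v$ has the same $b$-coordinate as $v$: any such colliding vertex $w$ lies in $\Gamma(u_j)\cup\{u_j\}$ for some neighbor $u_j$ of $v$, and the local check at line~8---together with the inclusion $\phi(u_j)\in C_\Gamma$ produced by line~6---catches the collision.

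With the conflict signal validated, the proof reduces to a short case analysis on whether each of $u,v$ is active and, among active vertices, whether it detects a conflict. If $a_u=a_v=0$, both colors are unchanged and remain distinct. If exactly one of them (say $u$) is frozen while $v$ is active, then either $v$ detects a conflict and the new first coordinate $a_v\neq 0$ separates $\phi'(u)$ from $\phi'(v)$, or $v$ detects none, in which case the conflict signal guarantees $b_v\neq b_u$ and hence $\langle 0,b_u\rangle\neq\langle 0,b_v\rangle$. When both are active, the analogous arguments handle every subcase except the one where both detect a conflict and their first coordinates $a_u,a_v$ coincide.

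This last subcase is the main (albeit mild) obstacle. The updated colors are $\phi'(u)=\langle a_u,b_u+a_u\bmod q\rangle$ and $\phi'(v)=\langle a_v,b_v+a_v\bmod q\rangle$; they agree iff $b_u\equiv b_v\pmod q$. The inductive hypothesis combined with $a_u=a_v$ forces $b_u\neq b_v$, and since both $b$-values lie in $\{0,\dots,q-1\}$ by construction, they are already incongruent modulo $q$; adding the common value $a_u=a_v$ preserves incongruence. This is precisely where the choice of $q$ as a prime upper-bounding the $b$-coordinates plays its role. Combining all cases yields $\phi'(u)\neq\phi'(v)$ for every pair at distance at most $2$, completing the inductive step and hence the lemma.
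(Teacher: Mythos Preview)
Your proof is correct and follows essentially the same approach as the paper: a case analysis on a pair $u,v$ at distance at most $2$, distinguishing whether each vertex is frozen ($a=0$) or active and whether an active vertex detects a conflict, with the key subcase $a_u=a_v\neq 0$ handled via the inductive hypothesis forcing $b_u\neq b_v$. One small remark: primality of $q$ is not actually needed in this lemma (only that $b$-values lie in $\{0,\dots,q-1\}$); primality is used later in the termination argument.
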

\begin{proof}
 Let $u,v$ be two vertices of distance at most $2$ one from another. First, assume they are exactly at distance $2$ one from another. In this case there is a common neighbor $w$ of $u,v$, i.e., $(u, w) \in E$ and $(w,v) \in E$. Denote the colors of $u,v$ in the beginning of a phase by $\varphi(u) = \langle a_u,b_u \rangle$, $\varphi(v) = \langle a_v,b_v \rangle$. Denote the colors of $u,v$ in the end of the phase by $\varphi'(u),\varphi'(v)$. Assuming that the coloring is proper in the beginning of a phase, i.e., $\varphi(u) \neq \varphi(v)$, we next prove that it is proper in the end of the phase as well. We divide the proof into several cases.\\ 
 {\bf Case 1:} $b_u = b_v$. In this case, $w$ discovers in line 8 of Algorithm 6 that there is a conflict between $u$ and $v$. Then $w$ notifies $u$ and $v$ about the conflict. Consequently, $a_u$ and $a_v$ do not change in this phase. Since in the beginning of the phase it holds that $\varphi(u) \neq \varphi(v), b_u = b_v$, it follows that $a_u \neq a_v$ in the beginning of the phase. This also holds in the end of the phase, and thus $\varphi'(u) \neq \varphi'(v)$. \\
 {\bf Case 2:} $b_u \neq b_v$ and ($a_u = 0$ or $a_v = 0$ or both). If both $a_u,a_v$ equal $0$ in the beginning of the phase, then the colors of $u,v$ do not change during the phase, and thus $\varphi'(u) \neq \varphi'(v)$.  Otherwise, exactly one of $a_u,a_v$ equals $0$ in the beginning of the phase. If this is the case in the end of the phase, then $\varphi'(u) \neq \varphi'(v)$. Otherwise, $b_u,b_v$ do not change during the phase, and since $b_u \neq b_v$, it holds that $\varphi'(u) \neq \varphi'(v)$ as well.
 {\bf Case 3:} $b_u \neq b_v$ and $a_u \neq 0$ and $a_v \neq 0$. If in the end of the phase both $a_u \neq 0$ and $a_v \neq 0$, it means that $a_u,a_v$ did not change. If $a_u = a_v$ then $b_u \neq b_v$ in the beginning of the phase, and in the end of the phase it holds that $\varphi'(u) = \langle a_u, (b_u + a_u) \mbox{ mod } q \rangle$, $\varphi'(v) = \langle a_v, (b_v + a_v) \mbox{ mod } q \rangle$, and thus $\varphi'(u) \neq \varphi'(v)$. If $a_u \neq a_v$, and they did not change during the phase, then $\varphi'(u) \neq \varphi'(v)$ as well.  If exactly one of $a_u,a_v$ became $0$ in the end of the phase, again $\varphi'(u) \neq \varphi'(v)$. Finally, if both $a_u,a_v$ became $0$, then $b_u,b_v$ did not change, and again $b_u \neq b_v$. \\
 Hence, in all possible cases, $\varphi'(u) \neq \varphi'(v)$. \\
 If $u,v$ are direct neighbors, rather than at distance $2$ one from another, $u$ and $v$ discover directly whether a conflict occurs, and the same result holds.	
\end{proof}	
Next, we analyze the number of rounds required for all vertices to obtain an $O(\Delta^2)$-coloring.
\begin{lem}\label{agroundsnumber}
	For any prime $q > 2 \cdot \Delta^2$, after $q$ rounds all nodes will change their color to the form $\phi(v) = \langle 0, b \rangle$.
\end{lem}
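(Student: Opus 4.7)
The plan is to fix any vertex $v$ and show that across the first $q$ phases, $v$ can experience at most $2\Delta^2$ conflict-phases. Since $q>2\Delta^2$, pigeonhole forces at least one conflict-free phase within the first $q$, and by Algorithm~2 that phase sets $v$'s $a$-coordinate to $0$ and exits the while-loop of Algorithm~3. So the lemma reduces to a per-vertex conflict count.

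Write $\varphi(v) = \langle a, b_v^{(0)} \rangle$ and assume $a \neq 0$ (else $v$ is already in the desired form). Since $q$ is prime and $1 \leq a \leq q-1$, $a$ is invertible modulo $q$, so while $v$ is still active the update rule $b \leftarrow b + a \pmod{q}$ gives $v$'s $b$-coordinate at the beginning of phase $t$ as $b_v^{(0)} + (t-1)a \pmod{q}$. The analogous formula tracks every vertex $u$ within distance $2$ of $v$ while $u$ is still active; once $u$ halts, say at phase $t_0$, its $b$-coordinate is frozen at $B_u = b_u^{(0)} + (t_0 - 1)a_u \pmod{q}$.

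The core step is to bound, for each fixed $u$ at distance at most $2$ from $v$, the number of phases in $\{1, \ldots, q\}$ during which $u$ can cause a conflict for $v$, splitting into two regimes. While $u$ is active, the condition $b_v = b_u$ reduces to the linear equation $(t-1)(a - a_u) \equiv b_u^{(0)} - b_v^{(0)} \pmod{q}$. If $a = a_u$, this would force $\varphi(u) = \varphi(v)$ at the start, contradicting the properness of the distance-$2$ coloring preserved by Lemma~\ref{aground}; hence $a - a_u$ is invertible mod $q$ and the equation has at most one solution in $t$. While $u$ is frozen at $B_u$, the condition $b_v^{(0)} + (t-1)a \equiv B_u \pmod{q}$ also has exactly one solution in $t$, again by invertibility of $a$. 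Thus each such $u$ contributes at most two conflict-phases.

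Summing over the at most $\Delta + \Delta(\Delta - 1) = \Delta^2$ vertices within distance $2$ of $v$ (excluding $v$ itself), the total conflict-phases for $v$ is at most $2\Delta^2$, strictly less than $q$ by hypothesis, so $v$ must halt with $\varphi(v) = \langle 0, b \rangle$ by phase $q$. The delicate point is exactly the regime split between ``$u$ active'' and ``$u$ frozen'': because $u$ and $v$ may halt at different phases, the same $u$ can contribute one conflict in each regime, and this is precisely why the threshold $q > 2\Delta^2$ (rather than $q > \Delta^2$) is the right one.
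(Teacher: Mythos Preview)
Your proof is correct and follows essentially the same route as the paper: both arguments bound, for each distance-$2$ neighbor $u$, the number of phases in which $u$ can conflict with $v$ by two (one while both are still ``active'' via the linear equation over $\mathbb{Z}_q$, one after $u$ has frozen its $a$-coordinate to $0$), then sum over the at most $\Delta^2$ such neighbors and apply pigeonhole against $q>2\Delta^2$. Your explicit two-regime split is in fact a cleaner rendering of what the paper states somewhat tersely; one small wording quibble is that $a=a_u$ does not by itself force $\varphi(u)=\varphi(v)$ --- rather, if $a=a_u$ then properness gives $b_v^{(0)}\neq b_u^{(0)}$ and hence \emph{zero} active-regime conflicts --- but your bound of ``at most one'' still holds, so the argument goes through unchanged.
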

\begin{proof}
    By Lemma \ref{aground}, after each phase, the coloring remains proper.  For $\langle a_u,b_u \rangle \neq \langle a_v,b_v\rangle$, the equation $a_u \cdot x + b_u = a_v \cdot x + b_v$ in $Z_q$ has exactly one solution, since $q$ is prime. Thus, if the vertices $v$ and $u$ have a conflict of their $b$-value in phase $i$, then they can be in conflict only one more time in the next $q-1$ phases, when one of their $a$-values change to $0$. (This happens, without loss of generality, when the color of $v$ is in the form $\langle 0 , b \rangle$ and the color of $u$ is in the form $\langle a , b \rangle, a > 0$.) We choose $q > 2 \cdot \Delta^2$. Then every vertex can have at most 2 conflicts with each of its 2-hop neighbours within $q$ rounds. From the pigeonhole principle, eventually there is going to be a phase for each vertex, where its $b$-value has no conflicts. This is because the number of vertices in a $2$-hop-neighborhood is at most $\Delta^2$, and with each of them there are at most $2$ conflicts during $q$ rounds. In addition, each phase (Lines 6-15 in Algorithm 6) require $O(1)$ rounds. (This holds even if 1-bit messages are sent per edge per round.) Therefore, the algorithm total running time is $O(\Delta^2)$ rounds.
\end{proof}

\end{document}